\documentclass[twocolumn,conference,letterpaper]{IEEEtran}
\usepackage[left=0.75in,right=0.75in,top=1in,bottom=0.75in]{geometry}

\usepackage{cite}

\usepackage{amssymb,amsmath,latexsym,amsfonts,amsthm,mathtools}

\DeclareMathOperator*{\argmin}{arg\,min}
\DeclareMathOperator{\im}{im}

\usepackage[colorlinks=true]{hyperref}
\hypersetup{colorlinks, breaklinks, citecolor=blue, linkcolor=blue, urlcolor=blue}
\usepackage{graphicx}
\graphicspath{{figures/}{results_data_informative_safety/}}
\usepackage{float,subcaption}
\usepackage{textcomp}
\usepackage{xcolor}
\definecolor{darkpastelpurple}{rgb}{0.59, 0.44, 0.84}
\usepackage[nameinlink]{cleveref}
\Crefformat{figure}{#2Fig.~#1#3}
\Crefmultiformat{figure}{Figs.~#2#1#3}{ and~#2#1#3}{, #2#1#3}{ and~#2#1#3}

\usepackage{tikz}
\usetikzlibrary{automata, shapes, arrows, calc, arrows.meta, fit, positioning}

\theoremstyle{plain}
\newtheorem{theorem}{Theorem}
\newtheorem{lemma}{Lemma}
\newtheorem{corollary}{Corollary}

\newtheorem*{problem*}{Problem}
\newtheorem{problem}{Problem}
\Crefname{problem}{Problem}{Problems} 
\crefname{problem}{problem}{problems}     

\theoremstyle{remark}
\newtheorem{remark}{Remark}
\newtheorem{assumption}{Assumption}
\Crefname{assumption}{Assumption}{Assumptions} 
\crefname{assumption}{assumption}{assumptions}     

\newtheorem{definition}{Definition}
\theoremstyle{definition}

\usepackage{mathrsfs}

\usepackage{newtxmath}
\usepackage{booktabs}

\IEEEoverridecommandlockouts

\begin{document}
\title{Finite-Data Safety Informativity Under Dynamic Asymmetric Actuation}
\author{Abhinav Sinha,~\IEEEmembership{Senior Member,~IEEE}, Praveen Kumar Ranjan,~\IEEEmembership{Member,~IEEE},\\
and Yongcan Cao,~\IEEEmembership{Senior Member,~IEEE}%
\thanks{A. Sinha is with the GALACxIS Lab, Department of Aerospace Engineering, University of Cincinnati, Cincinnati, OH 45221, USA (e-mail: abhinav.sinha@uc.edu). P. K. Ranjan and Y. Cao are with the Unmanned Systems Lab (USL), Department of Electrical Engineering, The University of Texas at San Antonio, TX 78249, USA (email:praveen.ranjan@my.utsa.edu, yongcan.cao@utsa.edu).}
}
\maketitle
\thispagestyle{empty}

\begin{abstract}
When the system model is not fully known, measurement error and limited excitation can leave several models consistent with the same finite data. A command judged safe for one model may fail for another, while limited control authority can prevent the corrective action needed to preserve safety. To ensure safety under model uncertainty and asymmetric input limits, we develop a finite-data certificate that determines whether a command can enforce a prescribed safety inequality. For a linearly parameterized safety channel with exactly known regressors and bounded aggregate residual error, we derive a support formula for the worst-case safety contribution of all data-consistent models. The formula identifies the regressor directions that admit a finite bound, allowing rank-deficient records to contribute to safety certification. Using certified componentwise bounds on actuator tracking error yields an affine inequality with a necessary and sufficient test for pointwise command feasibility. The affine inequality reduces computation of the closest certified command to a scalar root-finding problem. It also yields a closed-form gate that selects the largest certified fraction of a prescribed command segment. The proposed certificate guarantees output safety within its operating domain, provided the feedback is locally Lipschitz and the uncertainty bounds remain valid. Domain retention and full-state continuation extend this guarantee to all time. A vehicle study demonstrates that output safety can be certified from finite measurements in a safety-critical setting with model and actuator uncertainty.
\end{abstract}
\begin{IEEEkeywords}
Data informativity, safety-critical control, dynamic asymmetric actuation, data-driven control, safe learning.
\end{IEEEkeywords}
\setlength{\abovedisplayskip}{4pt plus 1pt minus 1pt}
\setlength{\belowdisplayskip}{4pt plus 1pt minus 1pt}
\setlength{\belowdisplayshortskip}{4pt plus 1pt minus 1pt}
\setlength{\textfloatsep}{10pt plus 2pt minus 2pt}
\setlength{\floatsep}{6pt plus 1pt minus 1pt}
\captionsetup[figure]{skip=4pt}

    \section{Introduction}
In many applications, a controller must meet a performance objective while respecting constraints that define safe operation. A vehicle, for example, may need to follow a reference trajectory while remaining inside a moving corridor. Near the corridor boundary, the controller must retain enough authority to counter disturbances without violating the input limits. When control decisions rely on finite measurements, the data must support the safety of a proposed correction, and the actuators must be able to realize it.

Set-invariance theory and control barrier functions provide conditions for selecting corrective inputs that keep trajectories within admissible sets \cite{Blanchini1999Survey,Ames2017CBF}. Safety filters use these conditions to modify a nominal command through an instantaneous inequality \cite{Ames2017CBF} or predictive optimization \cite{Wabersich2021PSF}. Higher-order barrier designs address constraints of high relative degree \cite{Xiao2022HOCBF} and the feasibility of corrective inputs under tight input bounds \cite{Xiao2022Feasibility}. These safety guarantees depend on a model or uncertainty bounds that remain valid during operation. When a reliable model is unavailable, unknown contributions from the dynamics prevent direct evaluation of these conditions.

Measurements can be used to estimate the unknown terms and quantify the uncertainty relevant to safety certification. Gaussian-process methods use prediction bounds to account for uncertainty in barrier conditions \cite{Jagtap2020GPCBF,Dhiman2023BayesianCBF}, while reinforcement learning can estimate corrections to the model terms entering safety constraints \cite{Choi2020RLCBFCLF}. Conformal calibration connects learned models to probabilistic guarantees \cite{HsuTsukamoto2025Conformal}. Robust adaptive and uncertainty-estimation methods combine estimates with certified error bounds \cite{Lopez2021RobustAdaptiveCBF,Das2025UncertaintyCBF}. However, accurate prediction and uncertainty quantification do not establish whether a given finite record is sufficient for safety certification under limited control authority. A deterministic certificate based on a finite record must establish that an admissible command satisfies the safety inequality for every model consistent with the measurements and their error bound.

Data informativity formalizes this question of sufficiency by assessing whether a record supports a specified task for all data-consistent systems \cite{VanWaarde2020Informativity,VanWaarde2023CSM}. In linear feedback synthesis, data parameterizations and matrix inequalities yield conditions for such sufficiency under bounded errors \cite{DePersis2020Formulas,VanWaarde2022MatrixSLemma,Bisoffi2022Petersen}. For nonlinear systems, set-membership and barrier-learning methods use structural and regularity assumptions to obtain deterministic guarantees from data \cite{Martin2023Survey,Strong2025DeterministicCBF}. A further challenge is to preserve the safety guarantee when the command is realized by a dynamic actuator. Tracking error can invalidate the safety inequality, while asymmetric limits restrict the corrective action available to maintain safety.

Actuator-aware barrier formulations address this challenge through optimization-based filters that account for dynamically generated inputs or unmodeled input dynamics \cite{Ames2020IntegralCBF,Seiler2021IQC}. Admissibility-preserving input realization (APIR) embeds asymmetric limits directly in a smooth actuator map, keeping the realized input admissible by construction \cite{Kumar2026APC,SinhaKumar2026AsymmetricConsensus,Sinha2026DirectedSafeCoordination}. Within this realization, certified tracking-error bounds quantify a margin in the safety inequality that accounts for the difference between commanded and realized inputs.

Output-safety certification from finite data still faces an obstacle in the sense that the corrective action required by the data-consistent models may exceed the authority available under actuator tracking error and asymmetric input limits. Addressing this problem requires bounding the safety contribution of the data-consistent models without full parameter identification and comparing that bound with the authority available under actuator tracking error. In such scenarios, a feasible certificate must be converted into command-selection rules with the regularity needed for trajectory safety. We address these requirements through the following contributions.
\begin{itemize}
    \item We characterize the minimum residual-error budget for data consistency and quantify the amplification of uncertainty at each queried regressor. For a consistent record and a nonzero safety functional, the result identifies the unmeasured regressor directions that make the worst-case model contribution unbounded.
    \item We quantify the loss of certified control authority due to actuator tracking error as an additive reserve. Subtracting this reserve from the static finite-data authority characterizes pointwise feasibility for the chosen certificate and recovers the static certificate when the error bounds vanish.
    \item For implementation, we derive the command projection and admission gate with sufficient conditions for local Lipschitz continuity. A counterexample establishes that pointwise certificate feasibility can hold even when no continuous feedback selector exists.
    \item We validate the certificate numerically on a vehicle model by comparing coverage across equal-sized measurement records and against static and exact-parameter benchmarks. Closed-loop checks assess output and input constraint satisfaction, and halving the integration step preserves sampled certificate feasibility.
\end{itemize}

\section{Problem Formulation}
Our objective is to decide whether finite measurements certify an admissible command at a given state and time. This command must enforce a prescribed output-safety inequality for every system consistent with those measurements and every actuator error within a known bound. At feasible queries, we seek the closest certified command to a baseline and the largest admissible fraction of a desired command. The formulation below expresses output safety through a scalar storage function and identifies the uncertainties that enter its derivative.

The notation $\|\cdot\|$ denotes the Euclidean vector norm and $\|\cdot\|_{\mathrm{F}}$ the Frobenius norm. For a matrix $\mathbf{A}$, $\mathbf{A}^{\dagger}$ is the Moore--Penrose pseudoinverse, $\im(\mathbf{A})$ its column space, and $\ker(\mathbf{A})$ its nullspace. Identity matrices have compatible dimensions; $\mathbf{A}\succ\mathbf{0}$ denotes symmetric positive definiteness.

We describe the system by a nominal model plus a mismatch. Let $\mathbf{x}\in\mathbb{R}^{n}$ and $\mathbf{u}\in\mathbb{R}^{m}$ denote the state and realized input, where $n,m\geq1$, with time $t\geq0$ and initial certification time $t_{0}\geq0$. The true drift $\mathbf{f}^{\star}$ and input matrix $\mathbf{G}^{\star}$ have nominal counterparts $\mathbf{f}_{0}\in\mathbb{R}^{n}$ and $\mathbf{G}_{0}\in\mathbb{R}^{n\times m}$. The mismatch $\boldsymbol{\Delta}:=\mathbf{f}^{\star}-\mathbf{f}_{0}+(\mathbf{G}^{\star}-\mathbf{G}_{0})\mathbf{u}+\mathbf{w}$ collects the modeling errors and disturbance $\mathbf{w}\in\mathbb{R}^{n}$. The system satisfies
\begin{equation}
\dot{\mathbf{x}}=\mathbf{f}_{0}(\mathbf{x},t)
+\mathbf{G}_{0}(\mathbf{x},t)\mathbf{u}
+\boldsymbol{\Delta}(\mathbf{x},\mathbf{u},t).
\label{eq:plant}
\end{equation}
\begin{assumption}
\label{ass:plant-regularity}
The vector field in \eqref{eq:plant} is measurable in time, locally bounded, and locally Lipschitz in state and input uniformly on compact time intervals. For all-time claims, we additionally assume that bounded inputs cannot cause finite-time escape of the full state. A locally integrable linear-growth bound on the system vector field is sufficient for this continuation property.
\end{assumption}
Under locally Lipschitz feedback and fixed admissible disturbances, the regularity conditions in \Cref{ass:plant-regularity} ensure a unique closed-loop solution on an initial time interval. Output constraints may leave some state components unrestricted. The continuation condition rules out finite-time escape of the full state under bounded inputs.

To enforce the output limits, we use logarithmic coordinates that send the lower and upper boundaries to $-\infty$ and $+\infty$, respectively \cite{SinhaKumar2026AsymmetricConsensus,Kumar2026APC,Sinha2026DirectedSafeCoordination}. Keeping these coordinates bounded will keep the output strictly inside its permitted interval.
\begin{definition}
\label{def:safety-coordinates}
For $q\geq1$, we let the output map $\mathbf{h}:\mathbb{R}^{n}\times\mathbb{R}_{\geq0}\to\mathbb{R}^{q}$ and bounds $\underline{y}_{\ell},\overline{y}_{\ell}$, $\ell=1,\ldots,q$, be $C^{1}$, with $\overline{y}_{\ell}(t)-\underline{y}_{\ell}(t)\geq\underline{\varrho}_{\ell}>0$ for constants $\underline{\varrho}_{\ell}$. Let us define the normalized outputs $s_{\ell}:=(h_{\ell}-\underline{y}_{\ell})/(\overline{y}_{\ell}-\underline{y}_{\ell})$. On the safe set $\mathcal{C}(t):=\{\mathbf{x}:0<s_{\ell}(\mathbf{x},t)<1,~\forall\ell\}$, the logit coordinates are $z_{\ell}:=\log(s_{\ell}/(1-s_{\ell}))$ and $\mathbf{z}:=[z_{1},\ldots,z_{q}]^{\top}$. For a uniformly bounded $C^{1}$ reference $\mathbf{z}_{r}(t)\in\mathbb{R}^{q}$ and constant $\mathbf{P}\in\mathbb{R}^{q\times q}$ with $\mathbf{P}\succ\mathbf{0}$, we set $\mathbf{e}_{z}:=\mathbf{z}-\mathbf{z}_{r}$, $W:=\mathbf{e}_{z}^{\top}\mathbf{P}\mathbf{e}_{z}/2$, and $\mathbf{c}:=\mathbf{P}\mathbf{e}_{z}$.
\end{definition}
Bounded $W$ keeps the normalized outputs away from zero and one, providing the link between dissipation and strict output safety. We write $\mathbf{T}(\mathbf{x},t):=\mathbf{z}(\mathbf{x},t)$, with state Jacobian $\mathbf{T}_{\mathbf{x}}$ and partial time derivative $\mathbf{T}_{t}$. The nominal transformed drift, input matrix, and projected mismatch are $\mathbf{a}:=\mathbf{T}_{\mathbf{x}}\mathbf{f}_{0}+\mathbf{T}_{t}-\dot{\mathbf{z}}_{r}\in\mathbb{R}^{q}$, $\mathbf{B}:=\mathbf{T}_{\mathbf{x}}\mathbf{G}_{0}\in\mathbb{R}^{q\times m}$, and $\boldsymbol{\delta}:=\mathbf{T}_{\mathbf{x}}\boldsymbol{\Delta}\in\mathbb{R}^{q}$. Differentiating the coordinates in \Cref{def:safety-coordinates} along \eqref{eq:plant} yields
\begin{equation}
\dot{\mathbf{e}}_{z}=\mathbf{a}(\mathbf{x},t)
+\mathbf{B}(\mathbf{x},t)\mathbf{u}
+\boldsymbol{\delta}(\mathbf{x},\mathbf{u},t).
\label{eq:channel}
\end{equation}
We use measurements of the transformed error derivative $\dot{\mathbf{e}}_{z}$, recorded with the corresponding state $\mathbf{x}$ and realized input $\mathbf{u}$, to form residual observations of the mismatch in \eqref{eq:channel}. We represent this mismatch using known feature functions and unknown constant coefficients, with a bounded remainder accounting for terms outside that representation.
\begin{assumption}
\label{ass:parameterization}
For a regressor dimension $p\geq1$, a known regressor $\boldsymbol{\phi}(\mathbf{x},t)\in\mathbb{R}^{p}$, one unknown constant matrix $\boldsymbol{\Theta}^{\star}\in\mathbb{R}^{p\times q}$, and a remainder $\mathbf{d}\in\mathbb{R}^{q}$ with known bound $\bar{d}\geq0$, the identity $\boldsymbol{\delta}(\mathbf{x},\mathbf{u},t)=\boldsymbol{\Theta}^{\star\top}\boldsymbol{\phi}(\mathbf{x},t)+\mathbf{d}(\mathbf{x},\mathbf{u},t)$ holds throughout the operating domain and for every admissible realized input, with $\|\mathbf{d}\|\leq\bar{d}$. The same parameter matrix applies to all samples and subsequent queries, and the regressors are known without error.
\end{assumption}
\Cref{ass:parameterization} applies when the feature functions are known and their coefficients remain fixed during data collection and subsequent operation. For example, the features may describe wind patterns with unknown constant amplitudes. The remainder bound must hold in the transformed coordinates throughout the certification domain. Since the logit Jacobian grows near an output boundary, a bounded physical disturbance need not produce a uniform $\bar{d}$ on the entire safe set. The choice of operating domain is part of justifying this bound.

The regressor in \Cref{ass:parameterization} is independent of the input, so all input dependence of the mismatch is contained in $\mathbf{d}$. At fixed $(\mathbf{x},t)$, bounding $\mathbf{c}^{\top}\mathbf{d}$ by $\bar{d}\|\mathbf{c}\|$ leaves $\mathbf{c}^{\top}\mathbf{B}\mathbf{u}$ as the only input-dependent term in the upper bound on $\dot{W}$. This structure makes the eventual command constraint affine.

We next distinguish the requested input from the input delivered by the actuator. A smooth saturation map enforces the physical limits. A bound on its internal tracking error will account for the resulting input mismatch in the safety inequality.
\begin{assumption}
\label{ass:actuator}
For $j=1,\ldots,m$, we take the limits $\underline{u}_{j}<\overline{u}_{j}$ to be finite and the map $\sigma_{j}:\mathbb{R}\to(\underline{u}_{j},\overline{u}_{j})$ to be onto and $C^{1}$, with $0<\sigma'_{j}(\nu)\leq L_{\sigma,j}$ for known $L_{\sigma,j}>0$. We write $\chi_{j}:=\sigma_{j}^{-1}$ and use $\boldsymbol{\sigma}$ and $\boldsymbol{\chi}$ componentwise. For $0<\varepsilon_{j}<(\overline{u}_{j}-\underline{u}_{j})/2$, we define the inner command set $\mathcal{U}_{\varepsilon}:=\prod_{j=1}^{m}[\underline{u}_{j}+\varepsilon_{j},\overline{u}_{j}-\varepsilon_{j}]$. Given a command $\mathbf{u}_{c}\in\mathcal{U}_{\varepsilon}$ and internal state $\boldsymbol{\xi}\in\mathbb{R}^{m}$, we set $\boldsymbol{\nu}_{c}:=\boldsymbol{\chi}(\mathbf{u}_{c})\in\mathbb{R}^{m}$ and tracking error $\mathbf{e}_{a}:=\boldsymbol{\xi}-\boldsymbol{\nu}_{c}$. Then, the realized input becomes
\begin{equation}
\mathbf{u}=\boldsymbol{\sigma}(\boldsymbol{\nu}_{c}+\mathbf{e}_{a}).
\label{eq:actuator}
\end{equation}
With known rates $a_{j}>0$, we write $\boldsymbol{\Lambda}_{a}:=\operatorname{diag}(a_{1},\ldots,a_{m})$. A measurable effective disturbance $\boldsymbol{\omega}_{a}\in\mathbb{R}^{m}$ with known bounds $\bar{\omega}_{a,j}\geq0$ drives the error dynamics
\begin{equation}
\dot{\mathbf{e}}_{a}=-\boldsymbol{\Lambda}_{a}\mathbf{e}_{a}
+\boldsymbol{\omega}_{a};
~~|\omega_{a,j}(t)|\leq\bar{\omega}_{a,j}.
\label{eq:ea}
\end{equation}
The initial error $\mathbf{e}_{a}(t_{0})$ is finite and known.
\end{assumption}
For an inner-loop disturbance $\mathbf{w}_{a}(t)\in\mathbb{R}^{m}$, a conventional lag $\dot{\boldsymbol{\xi}}=-\boldsymbol{\Lambda}_{a}(\boldsymbol{\xi}-\boldsymbol{\nu}_{c})+\mathbf{w}_{a}$ has $\boldsymbol{\omega}_{a}=\mathbf{w}_{a}-\dot{\boldsymbol{\nu}}_{c}$ almost everywhere. The bound in \Cref{ass:actuator} must cover the commanded motion as well as the inner-loop disturbance. Then, it follows that a certified command-rate bound or derivative feedforward with a bounded residual can justify this requirement. Restricting $\mathbf{u}_{c}$ to the inner command set limits the magnitude of the internal command $\boldsymbol{\nu}_{c}=\boldsymbol{\chi}(\mathbf{u}_{c})$, but provides no bound on its time derivative $\dot{\boldsymbol{\nu}}_{c}$.

When the limits straddle zero, setting $L_{j}:=-\underline{u}_{j}>0$ and $U_{j}:=\overline{u}_{j}>0$ allows an admissible map to be written as
\begin{align}
    \sigma_{j}(\nu)=&-L_{j}+\dfrac{U_{j}+L_{j}}{1+(U_{j}/L_{j})\exp(-(U_{j}+L_{j})\nu/(U_{j}L_{j}))},
    \label{eq:asymmetric-map}
\end{align}
with $L_{\sigma,j}=(U_{j}+L_{j})^{2}/(4U_{j}L_{j})$.
The next lemma converts the internal tracking-error bound into an interval for the physical-input error. The decaying term accounts for the initial actuator error, whereas the remaining term accounts for the disturbance accumulated during operation.
\begin{lemma}
\label{lem:actuator-tube}
Under \Cref{ass:actuator}, we set $\tau:=t-t_{0}\geq0$ and define $\bar{e}_{a,j}(t):=\exp(-a_{j}\tau)|e_{a,j}(t_{0})|+(\bar{\omega}_{a,j}/a_{j})(1-\exp(-a_{j}\tau))$ and $\varepsilon_{a,j}(t):=L_{\sigma,j}\bar{e}_{a,j}(t)$. Then
\begin{equation}
|e_{a,j}(t)|\leq\bar{e}_{a,j}(t),
~~|u_{j}(t)-u_{c,j}(t)|\leq\varepsilon_{a,j}(t).
\label{eq:tube}
\end{equation}
The input satisfies $\underline{u}_{j}<u_{j}(t)<\overline{u}_{j}$ for all finite actuator states.
\end{lemma}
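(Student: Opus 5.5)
The plan has three parts: solve the decoupled linear error dynamics \eqref{eq:ea} componentwise, pass the resulting interval through the map $\sigma_j$ using its Lipschitz bound, and read off admissibility from the range of $\sigma_j$. Since $\boldsymbol{\Lambda}_a$ is diagonal and $\boldsymbol{\omega}_a$ is measurable and bounded, each component satisfies the scalar linear equation $\dot e_{a,j}=-a_j e_{a,j}+\omega_{a,j}$. This equation has a unique absolutely continuous (Carathéodory) solution given by variation of constants,
\begin{equation*}
e_{a,j}(t)=\exp(-a_j\tau)e_{a,j}(t_0)+\int_{t_0}^{t}\exp(-a_j(t-s))\,\omega_{a,j}(s)\,ds .
\end{equation*}
Taking absolute values and using $|\omega_{a,j}(s)|\le\bar\omega_{a,j}$, the integral is bounded by $\bar\omega_{a,j}\int_0^{\tau}\exp(-a_j r)\,dr=(\bar\omega_{a,j}/a_j)(1-\exp(-a_j\tau))$. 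This gives the first inequality in \eqref{eq:tube}. The only care needed here is that $\boldsymbol{\omega}_a$ is merely measurable, so I would state explicitly that the solution is understood in the Carathéodory sense; measurability and boundedness of $\omega_{a,j}$ then make the integral well defined.

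For the input error, I would use \eqref{eq:actuator} together with the identity $u_{c,j}=\sigma_j(\chi_j(u_{c,j}))=\sigma_j(\nu_{c,j})$. This identity requires $\chi_j$ to be a genuine inverse on $(\underline u_j,\overline u_j)$. That holds because $\sigma_j$ is onto this interval and strictly increasing ($\sigma_j'>0$), hence bijective, and $\mathcal U_\varepsilon$ lies inside the open interval. Then $u_j-u_{c,j}=\sigma_j(\nu_{c,j}+e_{a,j})-\sigma_j(\nu_{c,j})$. By the mean value theorem and $0<\sigma_j'\le L_{\sigma,j}$, this difference has magnitude at most $L_{\sigma,j}|e_{a,j}|\le L_{\sigma,j}\bar e_{a,j}(t)=\varepsilon_{a,j}(t)$.

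Finally, whenever the actuator state $\boldsymbol{\xi}=\boldsymbol{\nu}_c+\mathbf{e}_a$ is finite, $u_j=\sigma_j(\xi_j)$ lies in the codomain $(\underline u_j,\overline u_j)$ of $\sigma_j$, so the strict bounds hold.

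None of these steps is a real obstacle. The one point needing attention is the invertibility argument: it is what justifies writing $u_{c,j}=\sigma_j(\nu_{c,j})$, and hence what allows the tracking bound to transfer to the physical input.
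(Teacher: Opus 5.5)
Your proposal is correct and follows essentially the same route as the paper: variation of constants for the componentwise bound on $e_{a,j}$, the Lipschitz bound $0<\sigma_j'\le L_{\sigma,j}$ applied to \eqref{eq:actuator} for the input error, and the range of $\sigma_j$ for strict admissibility. Your remarks on Carath\'eodory solutions and on $\chi_j$ being a genuine inverse (so that $u_{c,j}=\sigma_j(\nu_{c,j})$) make explicit steps that the paper's proof leaves implicit.
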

\begin{proof}
Variation of constants in \eqref{eq:ea}, with the disturbance bound from \Cref{ass:actuator}, bounds $|e_{a,j}(t)|$ by $\bar{e}_{a,j}(t)$. Applying the Lipschitz bound in \Cref{ass:actuator} to \eqref{eq:actuator} yields $|u_{j}-u_{c,j}|\leq L_{\sigma,j}|e_{a,j}|$, proving \eqref{eq:tube}. The range of $\sigma_{j}$ in \Cref{ass:actuator} ensures strict amplitude admissibility.
\end{proof}
For the physical-input error $\mathbf{e}_{u}:=\mathbf{u}-\mathbf{u}_{c}$, \eqref{eq:tube} ensures that $\mathbf{e}_{u}(t)$ belongs to $\mathcal{E}_{a}(t):=\{\mathbf{e}_{u}\in\mathbb{R}^{m}:|e_{u,j}|\leq\varepsilon_{a,j}(t),~j=1,\ldots,m\}$. The safety certificate maximizes the actuator-error contribution $\mathbf{c}^{\top}\mathbf{B}\mathbf{e}_{u}$ over $\mathcal{E}_{a}(t)$. Since the actuator-error set may contain errors unattainable under \eqref{eq:actuator}, the resulting bound can be conservative.

The data record must describe the input actually received by the system. We use the realized input to form each residual, so actuator tracking error is accounted for through the recorded motion. For $N\geq1$ samples, indexed by $k=1,\ldots,N$, we use $t_{k}$ for the sampling time and set $\mathbf{x}_{k}:=\mathbf{x}(t_{k})\in\mathcal{C}(t_{k})$, $\mathbf{u}_{k}:=\mathbf{u}(t_{k})$, and $\boldsymbol{\phi}_{k}:=\boldsymbol{\phi}(\mathbf{x}_{k},t_{k})$. A measured derivative $\hat{\dot{\mathbf{e}}}_{z,k}\in\mathbb{R}^{q}$ determines the residual $\mathbf{r}_{k}:=\hat{\dot{\mathbf{e}}}_{z,k}-\mathbf{a}(\mathbf{x}_{k},t_{k})-\mathbf{B}(\mathbf{x}_{k},t_{k})\mathbf{u}_{k}$. We collect the sampled regressors in $\boldsymbol{\Phi}_{N}:=[\boldsymbol{\phi}_{1},\ldots,\boldsymbol{\phi}_{N}]\in\mathbb{R}^{p\times N}$ and the residuals in $\mathbf{R}_{N}:=[\mathbf{r}_{1},\ldots,\mathbf{r}_{N}]\in\mathbb{R}^{q\times N}$.
\begin{assumption}
\label{ass:data}
The aggregate remainder and measurement-error matrix $\mathbf{D}_{N}\in\mathbb{R}^{q\times N}$ satisfies $\mathbf{R}_{N}=\boldsymbol{\Theta}^{\star\top}\boldsymbol{\Phi}_{N}+\mathbf{D}_{N}$ and $\|\mathbf{D}_{N}\|_{\mathrm{F}}\leq\eta_{N}$ for a known $\eta_{N}\geq0$.
\end{assumption}
For example, if the complete residual-measurement error matrix has Frobenius norm at most $\bar{n}_{N}\geq0$, then one may take $\eta_{N}=\sqrt{N}\bar{d}+\bar{n}_{N}$. Errors in the regressors require an additional uncertainty description; with unbounded parameters, their effects generally cannot be absorbed into a known additive radius. The data-consistent parameters form the set
\begin{equation}
\mathcal{T}_{N}:=\left\{\boldsymbol{\Theta}\in\mathbb{R}^{p\times q}:
\|\mathbf{R}_{N}-\boldsymbol{\Theta}^{\top}\boldsymbol{\Phi}_{N}\|_{\mathrm{F}}
\leq\eta_{N}\right\}.
\label{eq:TN}
\end{equation}
\Cref{ass:data} ensures $\boldsymbol{\Theta}^{\star}\in\mathcal{T}_{N}$ without a rank requirement on $\boldsymbol{\Phi}_{N}$.

We seek a command that enforces $\dot{W}+\alpha(W)\leq0$, where $\alpha:\mathbb{R}_{\geq0}\to\mathbb{R}_{\geq0}$ is continuous and strictly increasing, with $\alpha(0)=0$. This inequality keeps $W$ bounded and hence preserves the output limits. A certified command must satisfy the inequality for every uncertainty allowed by the data and actuator bounds. Suppressing common state and time arguments, we express this requirement through the worst-case dissipation functional
\begin{multline}
\mathcal{M}_{N}^{a}(\mathbf{x},\mathbf{u}_{c},t):=\\
\sup_{\substack{\boldsymbol{\Theta}\in\mathcal{T}_{N},~\|\mathbf{d}\|\leq\bar{d}\\
\mathbf{e}_{u}\in\mathcal{E}_{a}(t)}}
\left\{\mathbf{c}^{\top}\bigl[\mathbf{a}+\mathbf{B}(\mathbf{u}_{c}+\mathbf{e}_{u})
+\boldsymbol{\Theta}^{\top}\boldsymbol{\phi}+\mathbf{d}\bigr]+\alpha(W)\right\}.
\label{eq:robust-dissipation}
\end{multline}
Following the informativity framework \cite{VanWaarde2020Informativity,VanWaarde2023CSM}, we require a common feedback law to satisfy the chosen certificate for all data-consistent models.
\begin{definition}
\label{def:safety-informativity}
The certified command set is $\mathcal{K}_{N}^{a}(\mathbf{x},t):=\{\mathbf{u}_{c}\in\mathcal{U}_{\varepsilon}:\mathcal{M}_{N}^{a}(\mathbf{x},\mathbf{u}_{c},t)\leq0\}$. We let $\Omega(t)\subseteq\mathcal{C}(t)$ define a relatively open operating domain $\mathcal{D}_{\Omega}:=\{(\mathbf{x},t):t\geq t_{0},~\mathbf{x}\in\Omega(t)\}$. The data are informative for the chosen certificate on $\mathcal{D}_{\Omega}$ if a locally Lipschitz law $\boldsymbol{\kappa}_{N}:\mathcal{D}_{\Omega}\to\mathcal{U}_{\varepsilon}$ satisfies $\boldsymbol{\kappa}_{N}(\mathbf{x},t)\in\mathcal{K}_{N}^{a}(\mathbf{x},t)$ throughout $\mathcal{D}_{\Omega}$.
\end{definition}
\begin{definition}
\label{def:pointwise-informativity}
The data and actuator tube are pointwise informative for the chosen certificate at $(\mathbf{x},t)$ if $\mathcal{K}_{N}^{a}(\mathbf{x},t)\neq\varnothing$.
\end{definition}
Pointwise informativity ensures that a certified command exists at a given state and time. Certificate informativity requires a locally Lipschitz feedback law that selects certified commands throughout the operating domain, providing the feedback regularity used in the closed-loop existence and uniqueness analysis.
\begin{problem}
\label{prob:safety-synthesis}
Under \Cref{ass:plant-regularity,ass:parameterization,ass:actuator,ass:data}, given the data $(\boldsymbol{\Phi}_{N},\mathbf{R}_{N},\eta_{N})$ and actuator-error set $\mathcal{E}_{a}(t)$, we seek to characterize the queries $(\mathbf{x},t)\in\mathcal{D}_{\Omega}$ satisfying \Cref{def:pointwise-informativity}. Let $\mathbf{u}_{0}(\mathbf{x},t)\in\mathbb{R}^{m}$ be a baseline command and $\mathbf{u}_{d}(\mathbf{x},t)\in\mathcal{U}_{\varepsilon}$ a desired learned command. At each feasible query, we seek $\mathbf{u}_{s}\in\mathcal{K}_{N}^{a}$ minimizing $\|\mathbf{u}_{s}-\mathbf{u}_{0}\|^{2}/2$ and the largest $\lambda\in[0,1]$ for which $\mathbf{u}_{c,\lambda}:=(1-\lambda)\mathbf{u}_{s}+\lambda\mathbf{u}_{d}\in\mathcal{K}_{N}^{a}$.
\end{problem}
The trajectory guarantee will apply while the state remains in $\mathcal{D}_{\Omega}$. An all-time conclusion also requires the full-state continuation property in \Cref{ass:plant-regularity}.

\section{Main Results}
Each row of the modeled residual matrix $\boldsymbol{\Theta}^{\top}\boldsymbol{\Phi}_{N}$ lies in the row space of $\boldsymbol{\Phi}_{N}$. Let $\mathbf{P}_{N}:=\boldsymbol{\Phi}_{N}^{\dagger}\boldsymbol{\Phi}_{N}\in\mathbb{R}^{N\times N}$ be the orthogonal projector onto this space. The orthogonal component of the measured residuals cannot be fitted by changing $\boldsymbol{\Theta}$, so we define its unavoidable norm as $\eta_{N}^{\min}:=\|\mathbf{R}_{N}(\mathbf{I}-\mathbf{P}_{N})\|_{\mathrm{F}}$. When $\eta_{N}\geq\eta_{N}^{\min}$, the remaining Frobenius-norm budget is $\chi_{N}:=\sqrt{\eta_{N}^{2}-(\eta_{N}^{\min})^{2}}$.
\begin{theorem}
\label{thm:data-support}
The set \eqref{eq:TN} is nonempty if and only if $\eta_{N}\geq\eta_{N}^{\min}$. Suppose $\mathcal{T}_{N}\neq\varnothing$. At a query with $\boldsymbol{\phi}\in\im(\boldsymbol{\Phi}_{N})$, define $\boldsymbol{\beta}_{N}:=\boldsymbol{\Phi}_{N}^{\dagger}\boldsymbol{\phi}\in\mathbb{R}^{N}$. Then the worst-case model contribution is
\begin{equation}
\sup_{\boldsymbol{\Theta}\in\mathcal{T}_{N}}
\mathbf{c}^{\top}\boldsymbol{\Theta}^{\top}\boldsymbol{\phi}
=\mathbf{c}^{\top}\mathbf{R}_{N}\boldsymbol{\beta}_{N}
+\chi_{N}\|\mathbf{c}\|\|\boldsymbol{\beta}_{N}\|.
\label{eq:support}
\end{equation}
For $\mathbf{c}=\mathbf{0}$, the support is zero for every $\boldsymbol{\phi}$. For $\mathbf{c}\neq\mathbf{0}$ and $\boldsymbol{\phi}\notin\im(\boldsymbol{\Phi}_{N})$, the support equals $+\infty$.
\end{theorem}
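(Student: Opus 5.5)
The plan is to decompose the Frobenius residual along the projector $\mathbf{P}_{N}$ and then solve a linear maximization over a Frobenius ball.

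\textbf{Nonemptiness.} Since every row of $\boldsymbol{\Theta}^{\top}\boldsymbol{\Phi}_{N}$ lies in the row space of $\boldsymbol{\Phi}_{N}$, we have $\boldsymbol{\Theta}^{\top}\boldsymbol{\Phi}_{N}(\mathbf{I}-\mathbf{P}_{N})=\mathbf{0}$. Pythagoras in the Frobenius inner product then gives
\[
\|\mathbf{R}_{N}-\boldsymbol{\Theta}^{\top}\boldsymbol{\Phi}_{N}\|_{\mathrm{F}}^{2}
=\|\mathbf{R}_{N}\mathbf{P}_{N}-\boldsymbol{\Theta}^{\top}\boldsymbol{\Phi}_{N}\|_{\mathrm{F}}^{2}+(\eta_{N}^{\min})^{2}.
\]
Hence $\mathcal{T}_{N}=\varnothing$ whenever $\eta_{N}<\eta_{N}^{\min}$. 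Conversely, take $\boldsymbol{\Theta}_{0}^{\top}:=\mathbf{R}_{N}\boldsymbol{\Phi}_{N}^{\dagger}$. Then $\boldsymbol{\Theta}_{0}^{\top}\boldsymbol{\Phi}_{N}=\mathbf{R}_{N}\mathbf{P}_{N}$, the first term vanishes, and $\boldsymbol{\Theta}_{0}\in\mathcal{T}_{N}$ as soon as $\eta_{N}\geq\eta_{N}^{\min}$.

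\textbf{Reparameterization.} Write $\boldsymbol{\Theta}=\boldsymbol{\Theta}_{0}+\boldsymbol{\Delta}_{\Theta}$ and $\mathbf{E}:=\boldsymbol{\Delta}_{\Theta}^{\top}\boldsymbol{\Phi}_{N}\in\mathbb{R}^{q\times N}$. The constraint becomes $\|\mathbf{E}\|_{\mathrm{F}}\leq\chi_{N}$. As $\boldsymbol{\Delta}_{\Theta}$ ranges over $\mathbb{R}^{p\times q}$, $\mathbf{E}$ ranges over exactly the matrices with $\mathbf{E}\mathbf{P}_{N}=\mathbf{E}$, i.e.\ those whose rows lie in $\im(\boldsymbol{\Phi}_{N}^{\top})$. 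For $\boldsymbol{\phi}\in\im(\boldsymbol{\Phi}_{N})$ we have $\boldsymbol{\phi}=\boldsymbol{\Phi}_{N}\boldsymbol{\beta}_{N}$, and $\boldsymbol{\beta}_{N}=\boldsymbol{\Phi}_{N}^{\dagger}\boldsymbol{\phi}\in\im(\boldsymbol{\Phi}_{N}^{\top})$, so $\mathbf{P}_{N}\boldsymbol{\beta}_{N}=\boldsymbol{\beta}_{N}$. The objective then splits as
\[
\mathbf{c}^{\top}\boldsymbol{\Theta}^{\top}\boldsymbol{\phi}
=\mathbf{c}^{\top}\mathbf{R}_{N}\mathbf{P}_{N}\boldsymbol{\beta}_{N}+\mathbf{c}^{\top}\mathbf{E}\boldsymbol{\beta}_{N}
=\mathbf{c}^{\top}\mathbf{R}_{N}\boldsymbol{\beta}_{N}+\langle\mathbf{E},\mathbf{c}\boldsymbol{\beta}_{N}^{\top}\rangle_{\mathrm{F}}.
\]

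\textbf{Support formula.} By Cauchy--Schwarz, the second term is at most $\chi_{N}\|\mathbf{c}\boldsymbol{\beta}_{N}^{\top}\|_{\mathrm{F}}=\chi_{N}\|\mathbf{c}\|\|\boldsymbol{\beta}_{N}\|$. The bound is attained by $\mathbf{E}^{\star}=\chi_{N}\mathbf{c}\boldsymbol{\beta}_{N}^{\top}/(\|\mathbf{c}\|\|\boldsymbol{\beta}_{N}\|)$ when the product is nonzero, and by $\mathbf{E}=\mathbf{0}$ otherwise. The main thing to verify is that $\mathbf{E}^{\star}$ is attainable, i.e.\ that its rows lie in the row space of $\boldsymbol{\Phi}_{N}$. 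This holds because $\boldsymbol{\beta}_{N}\in\im(\boldsymbol{\Phi}_{N}^{\top})$. An explicit preimage is $\boldsymbol{\Delta}_{\Theta}^{\top}=\mathbf{E}^{\star}\boldsymbol{\Phi}_{N}^{\dagger}$, which gives $\boldsymbol{\Delta}_{\Theta}^{\top}\boldsymbol{\Phi}_{N}=\mathbf{E}^{\star}\mathbf{P}_{N}=\mathbf{E}^{\star}$. This proves \eqref{eq:support}.

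\textbf{Degenerate cases.} The case $\mathbf{c}=\mathbf{0}$ is immediate, since the objective is identically zero. For $\mathbf{c}\neq\mathbf{0}$ and $\boldsymbol{\phi}\notin\im(\boldsymbol{\Phi}_{N})$, decompose $\boldsymbol{\phi}=\boldsymbol{\phi}_{\parallel}+\boldsymbol{\phi}_{\perp}$ with $\boldsymbol{\phi}_{\perp}\in\ker(\boldsymbol{\Phi}_{N}^{\top})$ nonzero. Perturb $\boldsymbol{\Theta}_{0}$ along $\boldsymbol{\Theta}_{0}+s\,\boldsymbol{\phi}_{\perp}\mathbf{c}^{\top}$ with $s>0$. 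This direction satisfies $(\boldsymbol{\phi}_{\perp}\mathbf{c}^{\top})^{\top}\boldsymbol{\Phi}_{N}=\mathbf{c}\boldsymbol{\phi}_{\perp}^{\top}\boldsymbol{\Phi}_{N}=\mathbf{0}$, so the residual is unchanged and the path stays in $\mathcal{T}_{N}$. Along it the objective grows by $s\|\mathbf{c}\|^{2}\|\boldsymbol{\phi}_{\perp}\|^{2}\to+\infty$ as $s\to\infty$, so the support is $+\infty$.
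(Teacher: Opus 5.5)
Your proposal is correct and follows essentially the same route as the paper. Both use the Pythagorean split of the residual along $\mathbf{P}_{N}$ and reparameterize by the row-space residual; your $\mathbf{E}$, centered at the least-squares $\boldsymbol{\Theta}_{0}$, is the paper's $-\mathbf{H}$. The remaining steps coincide as well: Frobenius Cauchy--Schwarz with the same rank-one maximizer, and the same nullspace perturbation $s\,\boldsymbol{\phi}_{\perp}\mathbf{c}^{\top}$ (the paper's $s\,\mathbf{v}\mathbf{c}^{\top}$) for out-of-span queries.
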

\begin{proof}
We separate the part of the residual that a parameter matrix can fit from the part fixed by the data. For any $\boldsymbol{\Theta}\in\mathbb{R}^{p\times q}$, we write the row-space residual as $\mathbf{H}:=\mathbf{R}_{N}\mathbf{P}_{N}-\boldsymbol{\Theta}^{\top}\boldsymbol{\Phi}_{N}$. Since $\mathbf{H}=\mathbf{H}\mathbf{P}_{N}$, the row-space residual is orthogonal to $\mathbf{R}_{N}(\mathbf{I}-\mathbf{P}_{N})$, and
\begin{equation}
\|\mathbf{R}_{N}-\boldsymbol{\Theta}^{\top}\boldsymbol{\Phi}_{N}\|_{\mathrm{F}}^{2}
=(\eta_{N}^{\min})^{2}+\|\mathbf{H}\|_{\mathrm{F}}^{2}.
\label{eq:residual-decomposition}
\end{equation}
For every row-space matrix $\mathbf{H}$, one may obtain a realizing parameter matrix from $\boldsymbol{\Theta}^{\top}=(\mathbf{R}_{N}\mathbf{P}_{N}-\mathbf{H})\boldsymbol{\Phi}_{N}^{\dagger}$. It follows from \eqref{eq:residual-decomposition} that the data are inconsistent when $\eta_{N}<\eta_{N}^{\min}$. Otherwise, \eqref{eq:TN} reduces to $\|\mathbf{H}\|_{\mathrm{F}}\leq\chi_{N}$, and $\mathbf{H}=\mathbf{0}$ attains the consistency threshold.

When $\boldsymbol{\phi}\in\im(\boldsymbol{\Phi}_{N})$, the identities $\boldsymbol{\Phi}_{N}\boldsymbol{\beta}_{N}=\boldsymbol{\phi}$ and $\mathbf{P}_{N}\boldsymbol{\beta}_{N}=\boldsymbol{\beta}_{N}$ reduce the objective to $\mathbf{c}^{\top}\mathbf{R}_{N}\boldsymbol{\beta}_{N}-\mathbf{c}^{\top}\mathbf{H}\boldsymbol{\beta}_{N}$. The Frobenius Cauchy--Schwarz inequality and \eqref{eq:residual-decomposition} bound the second term by $\chi_{N}\|\mathbf{c}\|\|\boldsymbol{\beta}_{N}\|$. If both vector norms are nonzero, the feasible choice $\mathbf{H}^{\star}:=-\chi_{N}\mathbf{c}\boldsymbol{\beta}_{N}^{\top}/(\|\mathbf{c}\|\|\boldsymbol{\beta}_{N}\|)$ attains the bound in \eqref{eq:support}. If either norm vanishes, the functional is zero.

For an out-of-span query, we set $\mathbf{v}:=(\mathbf{I}-\boldsymbol{\Phi}_{N}\boldsymbol{\Phi}_{N}^{\dagger})\boldsymbol{\phi}\neq\mathbf{0}$. Given $\boldsymbol{\Theta}_{0}\in\mathcal{T}_{N}$ and $s\geq0$, the perturbation $\boldsymbol{\Theta}_{s}:=\boldsymbol{\Theta}_{0}+s\mathbf{v}\mathbf{c}^{\top}$ preserves \eqref{eq:TN} because $\mathbf{v}^{\top}\boldsymbol{\Phi}_{N}=\mathbf{0}$. For $\mathbf{c}\neq\mathbf{0}$, the queried value grows by $s\|\mathbf{c}\|^{2}\|\mathbf{v}\|^{2}$ and tends to $+\infty$ as $s\to+\infty$, establishing that the support equals $+\infty$.
\end{proof}
The vector $\boldsymbol{\beta}_{N}$ expresses the queried feature using the measured features, whose norm determines the amplification of residual uncertainty in \eqref{eq:support}. Thus, equal sample counts and full row rank can coexist with very different uncertainty penalties.

We can now compare the uncertainty penalty with the available input authority. Let $\mathbf{b}_{N}:=\mathbf{B}^{\top}\mathbf{c}\in\mathbb{R}^{m}$ denote the input direction in the storage derivative. At an in-span query, collect the command-independent terms as
\begin{equation*}
d_{N}:=\mathbf{c}^{\top}\mathbf{a}+\mathbf{c}^{\top}\mathbf{R}_{N}\boldsymbol{\beta}_{N}+\alpha(W)+\chi_{N}\|\mathbf{c}\|\|\boldsymbol{\beta}_{N}\|+\bar{d}\|\mathbf{c}\|.
\end{equation*}
Maximizing over the actuator-error set yields the actuator reserve
\begin{equation*}
\rho_{a,N}:=\sum_{j=1}^{m}|b_{N,j}|\varepsilon_{a,j}.
\end{equation*}
Substituting this reserve and \eqref{eq:support} into \eqref{eq:robust-dissipation} leaves an affine function of the command:
\begin{equation}
\mathcal{M}_{N}^{a}(\mathbf{x},\mathbf{u}_{c},t)
=\mathfrak{m}_{N}^{a}(\mathbf{u}_{c})
:=d_{N}+\mathbf{b}_{N}^{\top}\mathbf{u}_{c}+\rho_{a,N}.
\label{eq:margin}
\end{equation}
At $\mathbf{c}=\mathbf{0}$, we set $d_{N}=0$ even for an out-of-span query; then $W=0$ and the margin is zero. For $\mathbf{c}\neq\mathbf{0}$ and an out-of-span query, \Cref{thm:data-support} implies $\mathcal{M}_{N}^{a}=+\infty$ and $\mathcal{K}_{N}^{a}=\varnothing$.

Feasibility is decided by minimizing \eqref{eq:margin} over the command set. For a finite margin, we choose the extreme command $\mathbf{u}_{N}^{\mathrm{ext}}\in\mathcal{U}_{\varepsilon}$ componentwise: $u_{N,j}^{\mathrm{ext}}:=\underline{u}_{j}+\varepsilon_{j}$ for $b_{N,j}\geq0$ and $u_{N,j}^{\mathrm{ext}}:=\overline{u}_{j}-\varepsilon_{j}$ otherwise. The negative of this minimum margin measures the remaining certified control authority.
\begin{theorem}
\label{thm:authority}
Under \Cref{ass:plant-regularity,ass:parameterization,ass:actuator,ass:data}, the set $\mathcal{K}_{N}^{a}(\mathbf{x},t)$ is nonempty at a query with finite margin \eqref{eq:margin} if and only if
\begin{equation}
\mathcal{A}_{N}^{a}:=-\bigl(d_{N}+\mathbf{b}_{N}^{\top}\mathbf{u}_{N}^{\mathrm{ext}}+\rho_{a,N}\bigr)\geq0.
\label{eq:authority}
\end{equation}
For a law $\boldsymbol{\kappa}_{N}$ satisfying \Cref{def:safety-informativity} and an initial state $\mathbf{x}(t_{0})\in\Omega(t_{0})$, we let $[t_{0},t_{\max})$ denote the maximal solution interval of the closed-loop system in $\mathcal{D}_{\Omega}$ under $\mathbf{u}_{c}=\boldsymbol{\kappa}_{N}(\mathbf{x},t)$. Then $\dot{W}\leq-\alpha(W)$ almost everywhere, while $W(t)\leq W(t_{0})$ and the output and realized-input constraints hold throughout $[t_{0},t_{\max})$.
\end{theorem}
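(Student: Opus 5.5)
The proof has two parts: pointwise feasibility, and then the trajectory consequences.

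For feasibility, I first confirm that \eqref{eq:margin} is the exact value of the supremum in \eqref{eq:robust-dissipation}. The three uncertainties $\boldsymbol{\Theta}$, $\mathbf{d}$, $\mathbf{e}_{u}$ range over independent sets and enter the objective additively, so the supremum splits into three separate suprema.
\begin{itemize}
\item The $\boldsymbol{\Theta}$ term is given by \Cref{thm:data-support}.
\item The $\mathbf{d}$ term equals $\bar{d}\|\mathbf{c}\|$, attained at $\mathbf{d}=\bar{d}\mathbf{c}/\|\mathbf{c}\|$, or is zero when $\mathbf{c}=\mathbf{0}$.
\item The $\mathbf{e}_{u}$ term is a linear functional $\mathbf{b}_{N}^{\top}\mathbf{e}_{u}$ over the box $\mathcal{E}_{a}(t)$. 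It is maximized at $e_{u,j}=\operatorname{sign}(b_{N,j})\varepsilon_{a,j}$, which gives $\rho_{a,N}$.
\end{itemize}
With this, $\mathcal{M}_{N}^{a}=\mathfrak{m}_{N}^{a}(\mathbf{u}_{c})$ is affine and separable in the components of $\mathbf{u}_{c}$. Over the product of intervals $\mathcal{U}_{\varepsilon}$, each term $b_{N,j}u_{c,j}$ is minimized at the lower endpoint if $b_{N,j}\geq0$ and at the upper endpoint otherwise. So $\mathbf{u}_{N}^{\mathrm{ext}}$ minimizes the margin, and $\min_{\mathcal{U}_{\varepsilon}}\mathfrak{m}_{N}^{a}=-\mathcal{A}_{N}^{a}$. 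Hence $\mathcal{K}_{N}^{a}\neq\varnothing$ if and only if $\mathcal{A}_{N}^{a}\geq0$: for sufficiency, $\mathbf{u}_{N}^{\mathrm{ext}}$ itself is certified; for necessity, every command has margin at least $-\mathcal{A}_{N}^{a}$. The degenerate case $\mathbf{c}=\mathbf{0}$ is consistent with this, since the margin is identically zero.

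For the trajectory part, I use \Cref{ass:plant-regularity}, the local Lipschitz property of $\boldsymbol{\kappa}_{N}$, and the smoothness of $\boldsymbol{\chi}$ and $\boldsymbol{\sigma}$. These give a unique absolutely continuous solution of the plant coupled with the actuator error dynamics on a maximal interval $[t_{0},t_{\max})$ in $\mathcal{D}_{\Omega}$. Along this solution, for almost every $t$:
\begin{itemize}
\item \Cref{lem:actuator-tube} gives $\mathbf{e}_{u}(t)\in\mathcal{E}_{a}(t)$.
\item \Cref{ass:parameterization} gives $\|\mathbf{d}\|\leq\bar{d}$.
\item \Cref{ass:data} gives $\boldsymbol{\Theta}^{\star}\in\mathcal{T}_{N}$.
\end{itemize}
Then \eqref{eq:channel} yields $\dot{W}=\mathbf{c}^{\top}\dot{\mathbf{e}}_{z}=\mathbf{c}^{\top}[\mathbf{a}+\mathbf{B}(\mathbf{u}_{c}+\mathbf{e}_{u})+\boldsymbol{\Theta}^{\star\top}\boldsymbol{\phi}+\mathbf{d}]$. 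This is one admissible element of the supremum, so $\dot{W}+\alpha(W)\leq\mathcal{M}_{N}^{a}(\mathbf{x},\boldsymbol{\kappa}_{N},t)\leq0$. Since $\alpha\geq0$, $W$ is nonincreasing, so $W(t)\leq W(t_{0})$.

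The remaining step is to convert this into output safety. Using $\mathbf{P}\succ\mathbf{0}$, I get $\|\mathbf{e}_{z}\|^{2}\leq2W(t_{0})/\lambda_{\min}(\mathbf{P})$. Together with the uniform bound on $\mathbf{z}_{r}$, this bounds every $|z_{\ell}|$ uniformly. Inverting the logit, $s_{\ell}=1/(1+e^{-z_{\ell}})$, then keeps $s_{\ell}$ in a compact subinterval of $(0,1)$, so the outputs stay strictly within their bounds. The realized input is strictly admissible by \Cref{lem:actuator-tube}, since the actuator state is finite on the solution interval.

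I expect the main obstacle to be the regularity bookkeeping rather than any inequality:
\begin{itemize}
\item making precise in what sense the closed loop is defined, given that the actuator error dynamics \eqref{eq:ea} are driven by a merely measurable disturbance;
\item justifying that $\dot{W}$ exists almost everywhere and that the chain rule holds, using absolute continuity of $W\circ\mathbf{x}$ on compact subintervals;
\item ensuring that the bounds of \Cref{ass:actuator} hold along the realized commanded motion.
\end{itemize}
My plan is to treat $\boldsymbol{\omega}_{a}$ as a fixed measurable signal and work with Carathéodory solutions, so that all the inequalities above hold pointwise almost everywhere.
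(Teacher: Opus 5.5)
Your proposal is correct and follows the paper's proof essentially step for step: you minimize the affine margin at $\mathbf{u}_{N}^{\mathrm{ext}}$ to get $\min_{\mathcal{U}_{\varepsilon}}\mathfrak{m}_{N}^{a}=-\mathcal{A}_{N}^{a}$, place the true parameter, remainder, and input error inside the supremum to obtain $\dot{W}+\alpha(W)\leq0$ almost everywhere, and convert the storage bound into a uniform logit bound and hence strict output gaps. The only differences are minor: you re-verify that the supremum in \eqref{eq:robust-dissipation} splits into three independent suprema (the paper establishes this in the text before the theorem), and you get strict input admissibility from finiteness of the actuator state rather than from the paper's uniform bound on $\boldsymbol{\xi}$ via compactness of $\boldsymbol{\chi}(\mathcal{U}_{\varepsilon})$ and the tube.
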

\begin{proof}
The componentwise minimizer $\mathbf{u}_{N}^{\mathrm{ext}}$ of the affine margin \eqref{eq:margin} satisfies $\min_{\mathbf{u}_{c}\in\mathcal{U}_{\varepsilon}}\mathcal{M}_{N}^{a}=d_{N}+\rho_{a,N}+\mathbf{b}_{N}^{\top}\mathbf{u}_{N}^{\mathrm{ext}}=-\mathcal{A}_{N}^{a}$. Since the minimum is attained, \Cref{def:safety-informativity,def:pointwise-informativity} establish the feasibility condition \eqref{eq:authority}.

For each fixed admissible measurable disturbance, \Cref{ass:plant-regularity,ass:actuator} and the feedback regularity in \Cref{def:safety-informativity} ensure local well-posedness of the closed-loop system \eqref{eq:plant}, \eqref{eq:actuator}, and \eqref{eq:ea} in $(\mathbf{x},\mathbf{e}_{a})$. By \Cref{ass:data,ass:parameterization} and \Cref{lem:actuator-tube}, the actual parameter, remainder, and input error belong to the uncertainty sets in \eqref{eq:robust-dissipation}. Differentiating $W$ in \Cref{def:safety-coordinates} along \eqref{eq:channel} yields $\dot{W}=\mathbf{c}^{\top}\dot{\mathbf{e}}_{z}$, so $\dot{W}+\alpha(W)\leq\mathcal{M}_{N}^{a}\leq0$ almost everywhere by \eqref{eq:robust-dissipation} and \Cref{def:safety-informativity}. Integration implies $0\leq W(t)\leq W(t_{0})$ because $\alpha(W)\geq0$.

The storage bound and \Cref{def:safety-coordinates} imply $|z_{\ell}(t)|\leq Z:=\sup_{t\geq t_{0}}\|\mathbf{z}_{r}(t)\|+\sqrt{2W(t_{0})/\lambda_{\min}(\mathbf{P})}<\infty$, where $\lambda_{\min}(\mathbf{P})$ denotes the smallest eigenvalue. We set $\zeta:=(1+\exp(Z))^{-1}>0$; the inverse logit and width bound in \Cref{def:safety-coordinates} yield $\zeta\leq s_{\ell}(t)\leq1-\zeta$ and $\min\{h_{\ell}-\underline{y}_{\ell},\overline{y}_{\ell}-h_{\ell}\}\geq\zeta\underline{\varrho}_{\ell}>0$. Compactness of $\boldsymbol{\chi}(\mathcal{U}_{\varepsilon})$ in \Cref{ass:actuator} and \eqref{eq:tube} keep $\boldsymbol{\xi}=\boldsymbol{\chi}(\mathbf{u}_{c})+\mathbf{e}_{a}$ bounded. Hence, \eqref{eq:actuator} preserves the strict input bounds.
\end{proof}
Under the full-state continuation condition in \Cref{ass:plant-regularity}, bounded inputs preclude finite-time escape of the system state. If every finite accumulation point of the state-time trajectory belongs to $\mathcal{D}_{\Omega}$, the bounded system and actuator states permit local continuation, implying $t_{\max}=\infty$. For $\Omega(t)=\mathcal{C}(t)$, the positive output gaps exclude finite-time approach to the safe-set boundary, so the same continuation condition ensures all-time safety.

We define the static authority on the same command set as $\mathcal{A}_{N}^{\mathrm{stat}}:=-(d_{N}+\mathbf{b}_{N}^{\top}\mathbf{u}_{N}^{\mathrm{ext}})$. Then \eqref{eq:authority} becomes $\mathcal{A}_{N}^{a}=\mathcal{A}_{N}^{\mathrm{stat}}-\rho_{a,N}$: the support $\rho_{a,N}$ of the actuator-error set measures the reduction in certified authority due to the tracking-error bound.
\begin{corollary}
\label{cor:static}
If $\mathbf{e}_{a}(t_{0})=\mathbf{0}$ and $\bar{\omega}_{a,j}=0$ for all $j$, \Cref{lem:actuator-tube} yields $\mathbf{u}=\mathbf{u}_{c}$ and $\rho_{a,N}=0$. The command set, authority test, projection, and gate reduce to the static certificate on $\mathcal{U}_{\varepsilon}$.
\end{corollary}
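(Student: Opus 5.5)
We verify the corollary by substituting the zero initial error and zero disturbance bounds into \Cref{lem:actuator-tube}, then propagating the result through the margin \eqref{eq:margin} and the authority \eqref{eq:authority}. With $\mathbf{e}_{a}(t_{0})=\mathbf{0}$ and $\bar{\omega}_{a,j}=0$, the definition in \Cref{lem:actuator-tube} gives $\bar{e}_{a,j}(t)=\exp(-a_{j}\tau)\cdot0+0\cdot(1-\exp(-a_{j}\tau))=0$ for every $\tau\geq0$. Hence $\varepsilon_{a,j}(t)=L_{\sigma,j}\bar{e}_{a,j}(t)=0$, and \eqref{eq:tube} forces $e_{a,j}(t)=0$ and $u_{j}(t)=u_{c,j}(t)$. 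Equivalently, \eqref{eq:ea} with $\boldsymbol{\omega}_{a}\equiv\mathbf{0}$ and zero initial condition has the unique solution $\mathbf{e}_{a}\equiv\mathbf{0}$. Then \eqref{eq:actuator} gives $\mathbf{u}=\boldsymbol{\sigma}(\boldsymbol{\chi}(\mathbf{u}_{c}))=\mathbf{u}_{c}$, because $\boldsymbol{\chi}=\boldsymbol{\sigma}^{-1}$ on $\mathcal{U}_{\varepsilon}\subset\prod_{j}(\underline{u}_{j},\overline{u}_{j})$.

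Next we treat the reserve. Since every $\varepsilon_{a,j}=0$, the set $\mathcal{E}_{a}(t)$ collapses to $\{\mathbf{0}\}$, and $\rho_{a,N}=\sum_{j}|b_{N,j}|\cdot0=0$. The supremum in \eqref{eq:robust-dissipation} over $\mathbf{e}_{u}$ becomes evaluation at $\mathbf{e}_{u}=\mathbf{0}$. Therefore $\mathcal{M}_{N}^{a}$ coincides with the static worst-case functional, namely the one with $\mathbf{B}\mathbf{u}_{c}$ in place of $\mathbf{B}(\mathbf{u}_{c}+\mathbf{e}_{u})$. Its affine form \eqref{eq:margin} reduces to $d_{N}+\mathbf{b}_{N}^{\top}\mathbf{u}_{c}$, still obtained through \Cref{thm:data-support}. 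The certified set $\mathcal{K}_{N}^{a}$ thus equals the static set $\{\mathbf{u}_{c}\in\mathcal{U}_{\varepsilon}:d_{N}+\mathbf{b}_{N}^{\top}\mathbf{u}_{c}\leq0\}$. In the out-of-span case with $\mathbf{c}\neq\mathbf{0}$, both sets are empty, and at $\mathbf{c}=\mathbf{0}$ both margins are zero.

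The authority test then follows from the identity $\mathcal{A}_{N}^{a}=\mathcal{A}_{N}^{\mathrm{stat}}-\rho_{a,N}$ stated just before the corollary: with $\rho_{a,N}=0$ it gives $\mathcal{A}_{N}^{a}=\mathcal{A}_{N}^{\mathrm{stat}}$. By \Cref{thm:authority}, condition \eqref{eq:authority} is then exactly the static feasibility test on $\mathcal{U}_{\varepsilon}$.

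The only step requiring care is the claim about the projection and the gate, since these are defined later in the paper. My plan is to argue that both are functionals of the affine margin data alone. The projection minimizes $\|\mathbf{u}_{s}-\mathbf{u}_{0}\|^{2}/2$ over $\mathcal{K}_{N}^{a}$. The gate maximizes $\lambda$ subject to $\mathbf{u}_{c,\lambda}\in\mathcal{K}_{N}^{a}$. Both are fully determined by the triple $(d_{N}+\rho_{a,N},\mathbf{b}_{N},\mathcal{U}_{\varepsilon})$ describing $\mathcal{K}_{N}^{a}$. Setting $\rho_{a,N}=0$ therefore yields the static projection and the static gate, with no further computation needed.
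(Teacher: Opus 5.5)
Your proposal is correct and follows the paper's own route. Substituting $\mathbf{e}_{a}(t_{0})=\mathbf{0}$ and $\bar{\omega}_{a,j}=0$ into \Cref{lem:actuator-tube} gives $\varepsilon_{a,j}\equiv0$, hence $\mathbf{u}=\mathbf{u}_{c}$, $\mathcal{E}_{a}(t)=\{\mathbf{0}\}$ and $\rho_{a,N}=0$; the reductions of the authority test, projection and gate then follow from $\mathcal{A}_{N}^{a}=\mathcal{A}_{N}^{\mathrm{stat}}-\rho_{a,N}$ and from \eqref{eq:projection-root} and \eqref{eq:gate} depending on the reserve only through the affine margin $d_{N}+\rho_{a,N}+\mathbf{b}_{N}^{\top}\mathbf{u}_{c}$.
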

In \Cref{cor:static}, the certified error bounds are zero. Retaining positive bounds in \Cref{lem:actuator-tube} can leave a positive reserve $\rho_{a,N}$ even when the realized actuator tracks perfectly.

When \eqref{eq:authority} holds, the minimum-deviation command sought in \Cref{prob:safety-synthesis} solves
\begin{equation}
\mathbf{u}_{s}=\argmin_{\mathbf{u}\in\mathcal{K}_{N}^{a}}
\tfrac{1}{2}\|\mathbf{u}-\mathbf{u}_{0}\|^{2}.
\label{eq:selector}
\end{equation}
The feasible set is the nonempty intersection of the compact rectangular command set $\mathcal{U}_{\varepsilon}$ with a halfspace, so strict convexity ensures a unique minimizer. We use $\operatorname{clip}_{\mathcal{U}_{\varepsilon}}$ for coordinatewise clipping and let $\lambda_{s}\geq0$ denote a halfspace multiplier. From the optimality conditions, one obtains $\mathbf{u}_{s}=\operatorname{clip}_{\mathcal{U}_{\varepsilon}}(\mathbf{u}_{0}-\lambda_{s}\mathbf{b}_{N})$. We take $\lambda_{s}=0$ when the clipped baseline is feasible; otherwise, we solve the scalar equation
\begin{equation}
d_{N}+\rho_{a,N}+\mathbf{b}_{N}^{\top}
\operatorname{clip}_{\mathcal{U}_{\varepsilon}}(\mathbf{u}_{0}-\lambda_{s}\mathbf{b}_{N})=0.
\label{eq:projection-root}
\end{equation}
The left-hand side of \eqref{eq:projection-root} is continuous, nonincreasing, and piecewise affine. Every coordinate with nonzero $b_{N,j}$ reaches the minimizing endpoint of its command interval at a finite multiplier, providing a finite root bracket whenever \eqref{eq:authority} holds. Multiple roots can represent the same unique command. Bisection approximates a root, and the returned command must satisfy \eqref{eq:margin} with a certified sign in finite precision.

Using this projection as a feedback law also requires regularity. Suppose $\mathbf{u}_{0}$, $\mathbf{b}_{N}$, $d_{N}$, and $\rho_{a,N}$ are locally Lipschitz on a neighborhood where \eqref{eq:margin} is finite. At a feasible query, linear independence of the active constraint normals is sufficient for local Lipschitz continuity of $\mathbf{u}_{s}$. Here, an active constraint is an inequality satisfied with equality. The corresponding optimality systems remain nonsingular locally, and their finitely many locally Lipschitz solution formulas join continuously at changes of active set. The existence of commands that satisfy the safety inequality with uniformly positive slack does not, by itself, guarantee local Lipschitz continuity of the projected command $\mathbf{u}_{s}$.

To determine the largest certified fraction along $\mathbf{u}_{c,\lambda}$, we let $m_{s}:=\mathfrak{m}_{N}^{a}(\mathbf{u}_{s})\leq0$ and $m_{d}:=\mathfrak{m}_{N}^{a}(\mathbf{u}_{d})$ denote the endpoint margins. The gate then takes the form
\begin{equation}
\lambda_{N,a}^{\star}=
\begin{cases}
1,&m_{d}\leq0,\\
-m_{s}/(m_{d}-m_{s}),&m_{d}>0.
\end{cases}
\label{eq:gate}
\end{equation}
Since both endpoint commands belong to the convex command set $\mathcal{U}_{\varepsilon}$, every $\mathbf{u}_{c,\lambda}$ with $\lambda\in[0,1]$ satisfies the command limits. By \eqref{eq:margin}, the margin along the segment is $\mathfrak{m}_{N}^{a}(\mathbf{u}_{c,\lambda})=(1-\lambda)m_{s}+\lambda m_{d}$. If $m_{d}\leq0$, the desired command is certified and $\lambda=1$ is admissible. For $m_{d}>0$, the certificate requires $\lambda\leq-m_{s}/(m_{d}-m_{s})$, establishing \eqref{eq:gate}.

When $m_{d}>0$, a positive admitted fraction is possible if and only if the safe endpoint has strict slack, $m_{s}<0$. For a desired command with $m_{d}>0$, choosing $\mathbf{u}_{0}=\mathbf{u}_{d}$ in \eqref{eq:selector} places the projection on the halfspace boundary, where $m_{s}=0$ and \eqref{eq:gate} returns $\lambda_{N,a}^{\star}=0$.

Suppose $m_{s}$ and $m_{d}$ are locally Lipschitz in $(\mathbf{x},t)$ and $m_{s}\leq-\mu$ on a neighborhood for some $\mu>0$. On the $m_{d}>0$ branch, the denominator satisfies $m_{d}-m_{s}\geq\mu$; at $m_{d}=0$, both formulas in \eqref{eq:gate} equal one. The branch formulas define a locally Lipschitz gate. With locally Lipschitz endpoint commands $\mathbf{u}_{s}$ and $\mathbf{u}_{d}$, the applied command satisfies the feedback regularity required by \Cref{thm:authority}; the actuator-error bounds must also remain valid for the resulting command. Discontinuities in the gate and applied command can occur at $m_{s}=m_{d}=0$, where \eqref{eq:gate} assigns $\lambda_{N,a}^{\star}=1$.
\begin{remark}
\label{rem:regularity-obstruction}
The decay certificate imposes stronger requirements than output safety. We consider scalar error, input, and disturbance $e,u,d\in\mathbb{R}$, with $\dot{e}=u+d$, $|d|\leq r$ for $r>0$, $W=e^{2}/2$, and $\alpha(W)=2kW$ for $k>0$. The robust margin $eu+r|e|+ke^{2}$ requires $u\leq-r-ke$ for $e>0$ and $u\geq r-ke$ for $e<0$. A sufficiently wide input interval permits pointwise feasibility near zero, but the incompatible one-sided bounds preclude a continuous selector through zero. Pointwise informativity in \Cref{def:pointwise-informativity} does not guarantee the locally Lipschitz feedback selection required by \Cref{def:safety-informativity}.
\end{remark}
\section{Numerical Results and Discussions}
We consider a vehicle moving in the horizontal--vertical plane under a structured wind field and bounded turbulence. The position $\mathbf{x}=[p_x,h]^{\top}$ evolves according to
\begin{align}
    \dot{p}_{x}=&u_x+w_x(p_x,h,t)+d_x,\\
    \dot{h}=&u_h+w_h(p_x,h,t)+d_h,
    \label{eq:vehicle-model}
\end{align}
where $\mathbf{u}=[u_x,u_h]^{\top}$ is the realized velocity input and $\|[d_x,d_h]^{\top}\|\in[0,0.04]$~m/s. The simulated wind field is
\begin{align}
    w_x=&0.35\sin(0.006p_x+0.09t)
    +0.15\tanh\left(\frac{h-100}{20}\right),\\
    w_h=&0.25\cos(0.008p_x-0.07t)
    +0.12\exp\left(-\frac{(h-105)^2}{300}\right).
    \label{eq:vehicle-wind}
\end{align}
The feature functions are known to the controller; the wind coefficients specify the simulated system and are treated as unknown in the finite-data certificate. During data acquisition and pointwise evaluation, the reference is $p_{x,r}(t)=13t$ and $h_r(t)=100+6\sin(0.08t)$. With $e_x:=p_x-p_{x,r}$ and $e_h:=h-h_r$, the tracking constraints are $e_x\in(-20,30)$~m and $e_h\in(-12,18)$~m.
The physical velocity limits are $u_x\in(8,18)$~m/s and $u_h\in(-3,2)$~m/s. Commands are restricted to the inner command set $\mathcal{U}_{\varepsilon}=[9,17]\times[-2.5,1.5]$~m/s.

Let $\mathbf{x}_{r}(t):=[p_{x,r}(t),h_r(t)]^{\top}$ denote the reference position. At a fixed time $t$, the tracking constraints define the shaded rectangle in \Cref{fig:vehicle-test-scenario}. The left and right boundaries of $\mathcal{C}(t)$ are $p_x=p_{x,r}(t)-20$ and $p_x=p_{x,r}(t)+30$; the lower and upper boundaries are $h=h_r(t)-12$ and $h=h_r(t)+18$, respectively. The rectangle translates with the reference position, leaving more admissible displacement ahead of and above the reference. The dashed curve traces the reference position over time. The black diamond and blue disk mark the reference and vehicle positions at the same instant, and the directed error arrows connect the two positions through their horizontal and vertical displacements.

The blue velocity arrows represent the controlled components $u_x$ and $u_h$. Wind and turbulence contribute additional velocity through \eqref{eq:vehicle-model}, so the controller must use the available input authority both to follow the reference and to compensate for the environmental terms. The illustrated vehicle lies ahead of and above the reference, corresponding to $e_x,e_h\in(0,\infty)$.
\begin{figure}[t]
    \centering
    \begin{tikzpicture}[
    x=1cm,y=1cm,font=\footnotesize,
    >={Stealth[length=1.7mm]},
    scenario dimension/.style={<->,thin,draw=black!65},
    scenario label/.style={fill=none,inner sep=1.3pt},
    scenario error/.style={->,thick,draw=violet!85!black},
    scenario input/.style={->,very thick,draw=blue!65!black},
    scenario disturbance/.style={->,thick,draw=orange!85!black}
]
    \coordinate (reference) at (3.3,2.6);
    \coordinate (vehicle) at (4.8,3.5);
    \coordinate (projection) at (4.8,2.6);

    \fill[blue!5] (1.3,1.4) rectangle (6.3,4.4);
    \draw[dashed,thick,blue!45!black] (1.3,1.4) rectangle (6.3,4.4);
    \node[anchor=north west,align=left,text=blue!45!black] at (1.43,4.28)
        {Admissible positions\\at time $t$: $\mathcal{C}(t)$};

    \draw[->,black!75] (0.2,0.45) -- (7.8,0.45)
        node[right] {$p_x$};
    \draw[->,black!75] (0.2,0.45) -- (0.2,5.5)
        node[above] {$h$};

    \draw[densely dotted,black!45] (3.3,1.4) -- (3.3,3.72);
    \draw[densely dotted,black!45] (1.3,2.6) -- (6.3,2.6);
    \draw[dashed,thick,black!65]
        (0.75,2.12) .. controls (1.8,2.7) and (2.65,2.94) .. (reference)
        .. controls (4.65,1.68) and (6.05,1.7) .. (7.4,2.16);
    \node[anchor=north,align=center,text=black!75] at (6.95,1.81)
        {Reference\\path};

    \draw[scenario error] (reference) -- (projection)
        node[midway,above=3pt,scenario label] {$e_x$};
    \draw[scenario error] (projection) -- (vehicle)
        node[midway,right=3pt,scenario label] {$e_h$};
    \draw[black!45,thin] (4.65,2.6) -- (4.65,2.75) -- (4.8,2.75);
    \fill[black!80]
        (3.3,2.69) -- (3.39,2.6) -- (3.3,2.51) -- (3.21,2.6) -- cycle;
    \node[anchor=north east,align=right] at (3.18,2.43)
        {Reference at $t$\\$\mathbf{x}_{r}(t)$};

    \draw[scenario input] (vehicle) -- (6.83,3.5)
        node[right=3pt] {$u_x$};
    \draw[scenario input] (vehicle) -- (4.8,5.14)
        node[above=3pt] {$u_h$};
    \filldraw[fill=blue!65!black,draw=white,line width=0.5pt]
        (vehicle) circle (2.6pt);
    \node[anchor=south west,align=left] at (4.95,3.62)
        {Vehicle\\$\mathbf{x}(t)$};

    \draw[black!40,thin] (1.3,1.4) -- (1.3,0.89);
    \draw[black!40,thin] (3.3,1.4) -- (3.3,0.89);
    \draw[black!40,thin] (6.3,1.4) -- (6.3,0.89);
    \draw[scenario dimension] (1.3,1) -- (3.3,1)
        node[midway,above=2pt,scenario label] {$20$~m};
    \draw[scenario dimension] (3.3,1) -- (6.3,1)
        node[midway,above=2pt,scenario label] {$30$~m};
    \draw[black!40,thin] (1.3,1.4) -- (0.64,1.4);
    \draw[black!40,thin] (1.3,2.6) -- (0.64,2.6);
    \draw[black!40,thin] (1.3,4.4) -- (0.64,4.4);
    \draw[scenario dimension] (0.8,1.4) -- (0.8,2.6);
    \draw[scenario dimension] (0.8,2.6) -- (0.8,4.4);
 
    \node[rotate=90,scenario label] at (0.5,2) {$12$~m};
    \node[rotate=90,scenario label] at (0.5,3.5) {$18$~m};

    \draw[scenario disturbance]
        (1.45,4.95) .. controls (1.92,5.21) and (2.42,4.72) .. (3,4.99);
    \draw[scenario disturbance]
        (1.45,5.24) .. controls (1.92,5.5) and (2.42,5.01) .. (3,5.28);
    \node[anchor=south,text=orange!70!black] at (2.22,5.49)
        {Wind and turbulence};

\end{tikzpicture}
    \caption{Position tracking with asymmetric error bounds.}
    \label{fig:vehicle-test-scenario}
\end{figure}

Applying \Cref{def:safety-coordinates} to the tracking errors yields
\begin{equation}
    \mathbf{z}=\begin{bmatrix}
    \log\left(\dfrac{e_x+20}{30-e_x}\right)\\
    \log\left(\dfrac{e_h+12}{18-e_h}\right)
    \end{bmatrix},~~
    \mathbf{z}_{r}=\begin{bmatrix}\log(2/3)\\\log(2/3)\end{bmatrix}.
\end{equation}
We use $\mathbf{P}=\operatorname{diag}(\mathrm{e},\mathrm{e}^{-1})$ and $\bar{W}=0.75$, where $\mathrm{e}$ is Euler's number. The evaluation region $\Omega(t):=\{\mathbf{x}:W(\mathbf{x},t)\in[0,\bar{W}]\}$ is fixed in tracking-error coordinates and moves with the reference in physical coordinates. Define
\begin{equation}
    \zeta_x:=\frac{50}{(e_x+20)(30-e_x)},~~
    \zeta_h:=\frac{30}{(e_h+12)(18-e_h)}.
\end{equation}
The nominal transformed terms are $\mathbf{B}=\operatorname{diag}(\zeta_x,\zeta_h)$ and $\mathbf{a}=-\mathbf{B}[\dot{p}_{x,r},\dot{h}_{r}]^{\top}$. The wind contribution has the representation $\boldsymbol{\Theta}^{\star\top}\boldsymbol{\phi}$, with
\begin{equation}
\boldsymbol{\phi}(\mathbf{x},t)=
 \begin{bmatrix}
  \zeta_x\sin(0.006p_x+0.09t)\\
  \zeta_x\tanh((h-100)/20)\\
  \zeta_h\cos(0.008p_x-0.07t)\\
  \zeta_h\exp(-(h-105)^2/300)
 \end{bmatrix}.
\end{equation}
The rows of $\boldsymbol{\Theta}^{\star}\in\mathbb{R}^{4\times2}$ are $[0.35,0]$, $[0.15,0]$, $[0,0.25]$, and $[0,0.12]$. For the actuator-error model in \Cref{ass:actuator}, we use $\boldsymbol{\Lambda}_{a}=\operatorname{diag}(2.5,4)$ and $\boldsymbol{\omega}_{a}\in[-0.5,0.5]\times[-0.2,0.2]$.

The initial acquisition campaign comprises five 30-s flights, starting from tracking errors $(-4,-3)$, $(-2,2)$, $(0,0)$, $(2,-2)$, and $(4,3)$~m. Bounded velocity commands provide sinusoidal excitation. Sampling the state, realized input, and transformed derivative every $0.2$~s produces 755 measurements. Two records retain $N=40$ measurements each: the consecutive record $\mathcal{D}_{\mathrm{c}}$ contains the first 40 samples of the first flight, and the uniform record $\mathcal{D}_{\mathrm{u}}$ contains eight approximately uniformly spaced samples from each flight.

Later, a broader acquisition campaign comprises 72 admissible 20-s flights with varied initial errors, excitation phases, and starting times $t_0\in\{0,15,30,45\}$~s. The resulting pool contains 7272 samples. From the samples in $\Omega(t)$, the certificate-directed record $\mathcal{D}_{\mathrm{s}}$ retains 40 measurements, with at most two from any flight. Selection first acquires the four feature directions and then favors weakly represented directions, weighted by $\|\mathbf{P}(\mathbf{z}-\mathbf{z}_{r})\|$. The retained sample count is common to all three records; the selected record draws on a larger acquisition campaign.
\begin{figure}[!t]
    \centering
    \captionsetup[subfigure]{font=footnotesize,skip=2pt}
    \begin{subfigure}[t]{.49\linewidth}
    \centering
    \includegraphics[width=\linewidth]{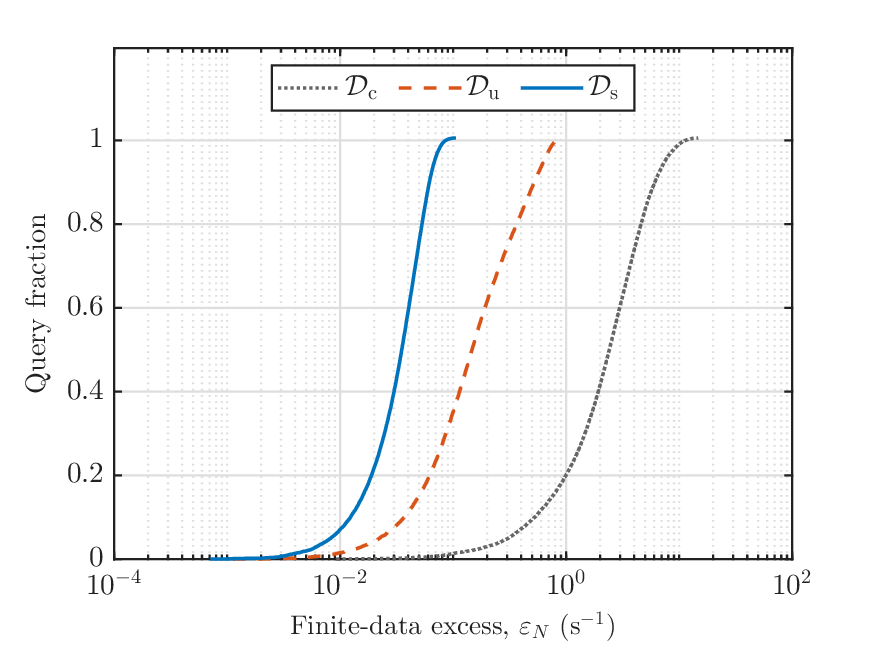}
    \caption{CDF of finite-data excess.}
    \label{fig:final_excess}
    \end{subfigure}\hfill
    \begin{subfigure}[t]{.49\linewidth}
    \centering
    \includegraphics[width=\linewidth]{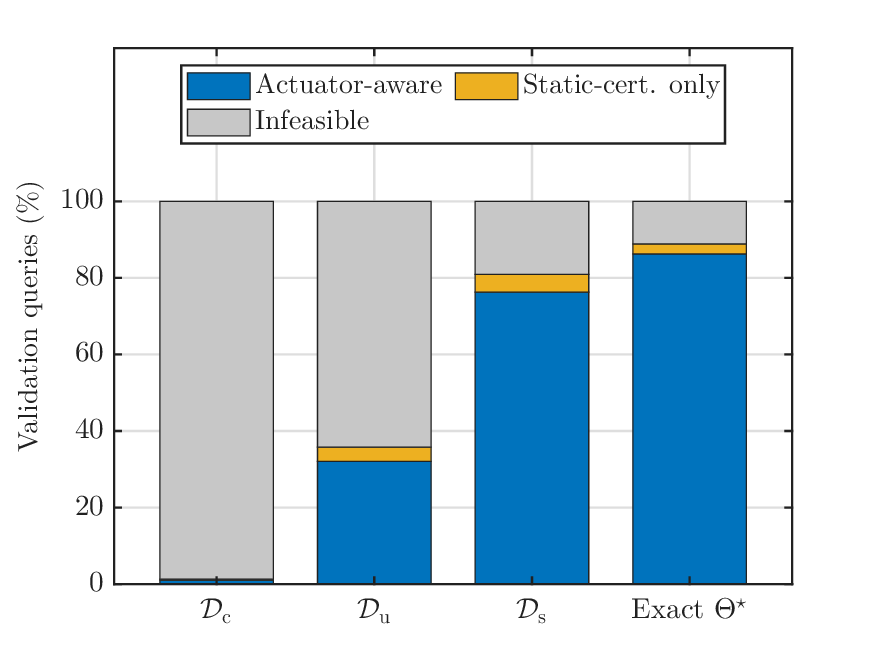}
    \caption{Pointwise certificate coverage.}
    \label{fig:final_coverage}
    \end{subfigure}
    \par\smallskip
    \begin{subfigure}[t]{.49\linewidth}
    \centering
    \includegraphics[width=\linewidth]{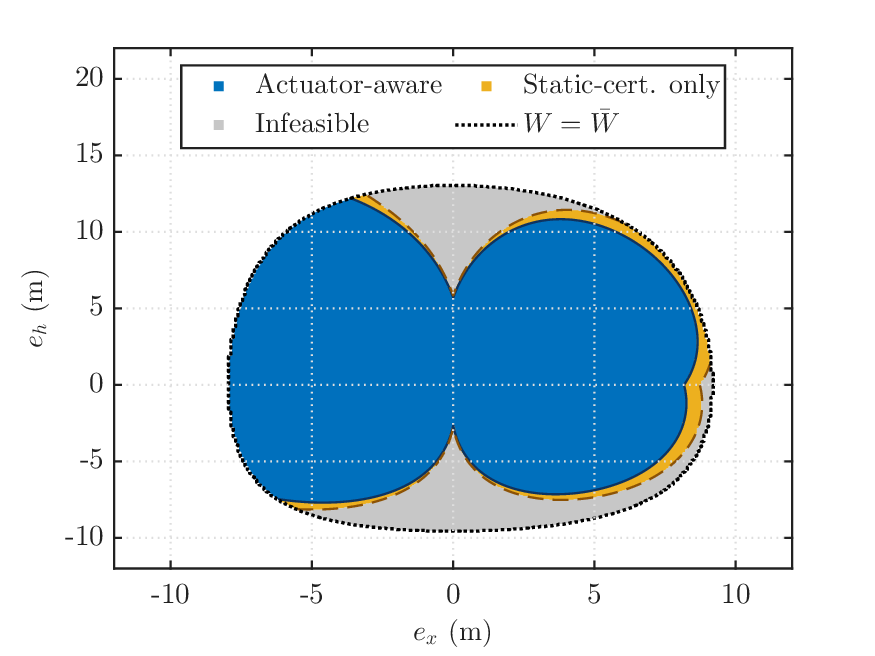}
    \caption{Certificate regions for $\mathcal{D}_s$.}
    \label{fig:final_map}
    \end{subfigure}
    \caption{Effect of dataset selection on safety informativity.}
    \label{fig:ini_1}
\end{figure}
\begin{figure}[!t]
    \centering
    \captionsetup[subfigure]{font=footnotesize,skip=2pt}
    \begin{subfigure}[t]{.49\linewidth}
    \centering
    \includegraphics[width=\linewidth]{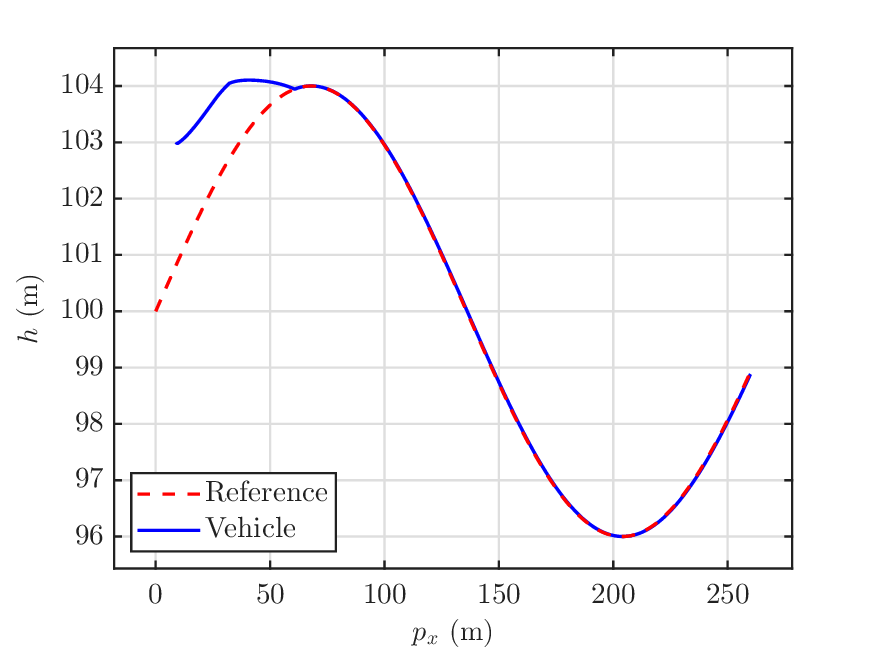}
    \caption{Trajectories.}
    \label{fig:2a}
    \end{subfigure}\hfill
    \begin{subfigure}[t]{.49\linewidth}
    \centering
    \includegraphics[width=\linewidth]{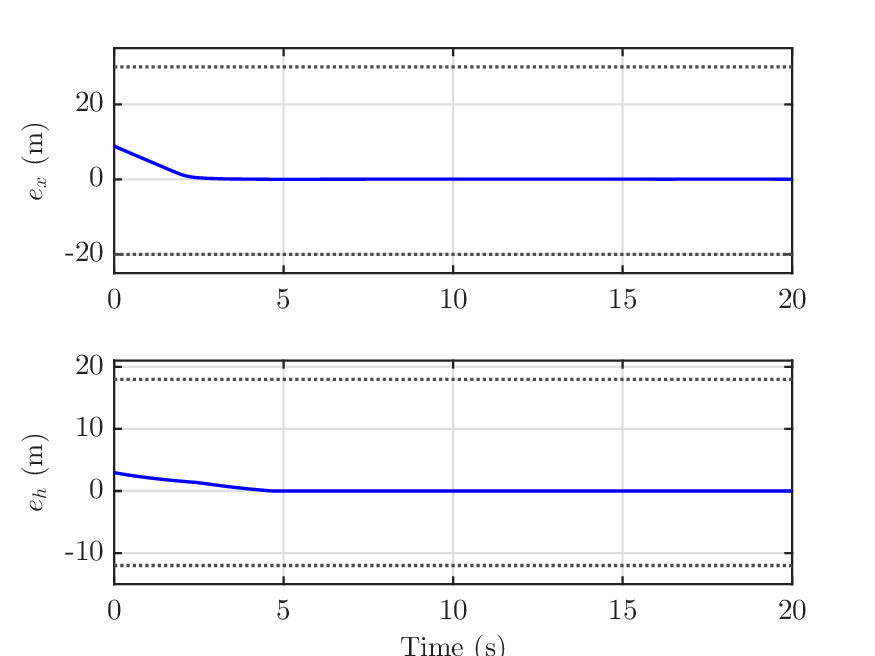}
    \caption{Errors.}
    \label{fig:2b}
    \end{subfigure}
    \par\smallskip
    \begin{subfigure}[t]{.49\linewidth}
    \centering
    \includegraphics[width=\linewidth]{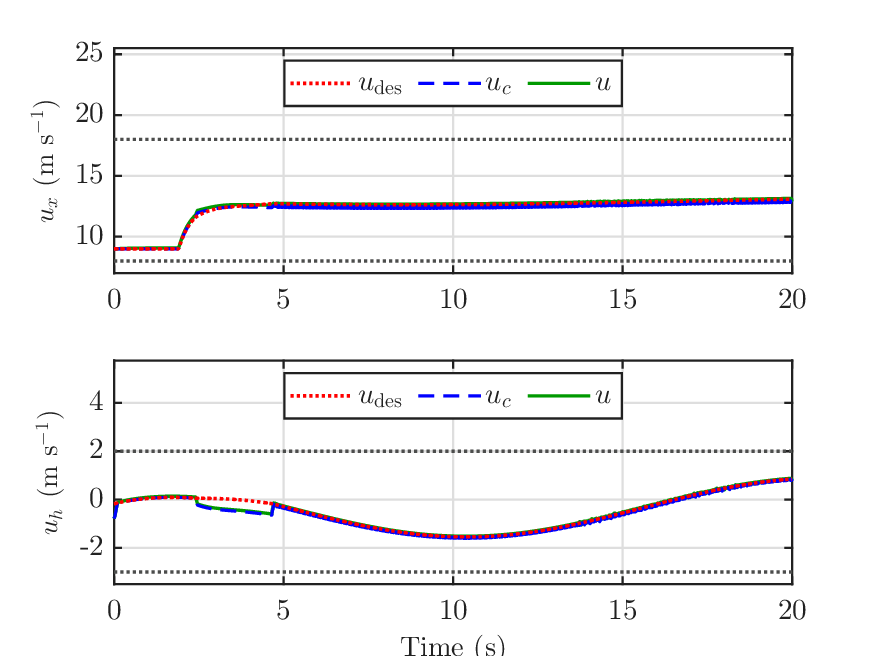}
    \caption{Control inputs.}
    \label{fig:2c}
    \end{subfigure}\hfill
    \begin{subfigure}[t]{.49\linewidth}
    \centering
    \includegraphics[width=\linewidth]{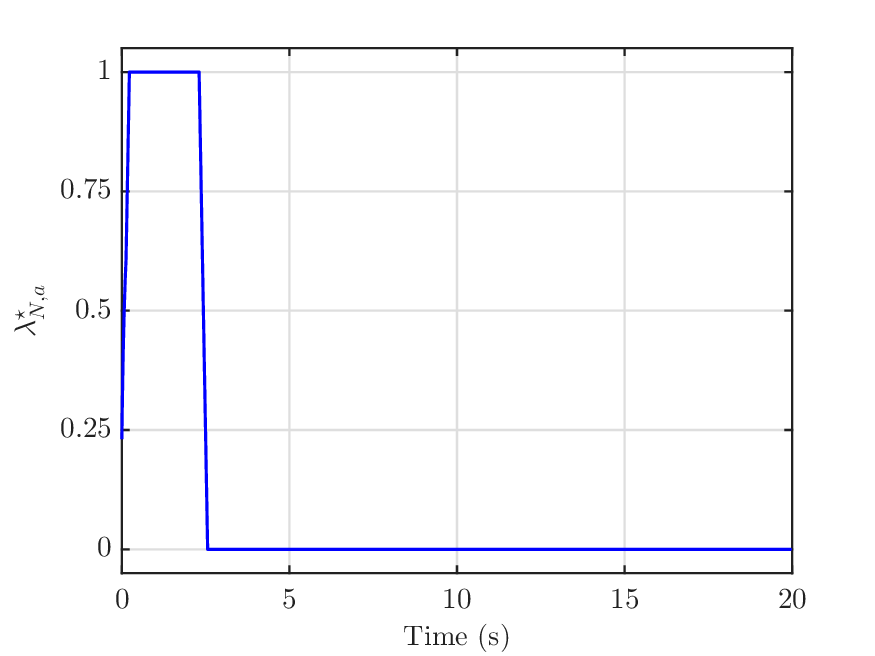}
    \caption{Blending coefficient.}
    \label{fig:2d}
    \end{subfigure}
    \caption{Closed-loop performance of the data-driven controller.}
    \label{fig:2}
\end{figure}

For each record, the sampled features and residuals form $\boldsymbol{\Phi}_{N}$ and $\mathbf{R}_{N}$ as in \Cref{ass:data}. Since $\mathbf{z}_{r}$ is constant, the residual at sample $k$ is $\mathbf{r}_{k}=\dot{\mathbf{z}}_{k}^{\mathrm{meas}}-\mathbf{a}_{k}-\mathbf{B}_{k}\mathbf{u}_{k}$. The consistency set then becomes
\begin{equation}
 \mathcal{T}_{N}=\{\boldsymbol{\Theta}\in\mathbb{R}^{4\times2}:
 \|\mathbf{R}_{N}-\boldsymbol{\Theta}^{\top}\boldsymbol{\Phi}_{N}\|_{\mathrm{F}}\in[0,\eta_N]\}.
 \label{eq:fig1_parameter_set}
\end{equation}
A transformed measurement-error bound of $0.004$~s$^{-1}$ and a transformed turbulence bound of $\bar{d}\approx0.01785$~s$^{-1}$ yield $\eta_N=\sqrt{40}(\bar{d}+0.004)\approx0.13819$~s$^{-1}$. Each record has $\operatorname{rank}(\boldsymbol{\Phi}_{N})=4$ and a nonempty consistency set, so \Cref{thm:data-support} applies to every query feature.

We evaluate all three certificates on a common set $\mathcal{Q}_{\mathrm{val}}$ of 11,241 state--time queries. A $41\times41$ grid restricted to the unit disk provides 1249 points $\mathbf{q}$, which are mapped through $\mathbf{z}=\mathbf{z}_{r}+\sqrt{2\bar{W}}\mathbf{P}^{-1/2}\mathbf{q}$ and the inverse logit at each of the nine times $t=0,7.5,\ldots,60$~s. The validation queries are used only for certificate evaluation. All comparisons use $\alpha(W)=0.8W$ and the asymptotic physical-input tube $\boldsymbol{\varepsilon}_{a,\infty}=[0.2,0.05]^{\top}$~m/s.

The finite-data excess is the difference between the support in \eqref{eq:support} and the true wind contribution $\mathbf{c}^{\top}\boldsymbol{\Theta}^{\star\top}\boldsymbol{\phi}$. In \Cref{fig:final_excess}, the empirical cumulative distribution records the fraction of validation queries whose excess is at most the horizontal-axis value. The reported median excess decreases from $2.423$~s$^{-1}$ for $\mathcal{D}_{\mathrm{c}}$ to $0.1477$~s$^{-1}$ for $\mathcal{D}_{\mathrm{u}}$ and $0.0353$~s$^{-1}$ for $\mathcal{D}_{\mathrm{s}}$. Full row rank ensures finite support, while the factors $\|\boldsymbol{\Phi}_{N}^{\dagger}\boldsymbol{\phi}\|$ and $\chi_N$ in \eqref{eq:support} determine the uncertainty penalty. Measurements distributed across flights provide a tighter certificate here, and selection based on underrepresented feature directions reduces the penalty further.

The actuator-aware certificate holds at $1.03\%$, $32.09\%$, and $76.27\%$ of the validation queries for $\mathcal{D}_{\mathrm{c}}$, $\mathcal{D}_{\mathrm{u}}$, and $\mathcal{D}_{\mathrm{s}}$, respectively; see \Cref{fig:final_coverage}. These percentages measure coverage of the sampled state--time set. Under the authority condition \eqref{eq:authority}, reducing the finite-data support leaves more of the bounded command authority available for storage decrease and actuator uncertainty. The exact-parameter benchmark certifies $86.26\%$ of the same queries. The exact-parameter benchmark's $9.99$ percentage-point advantage over $\mathcal{D}_{\mathrm{s}}$ quantifies the remaining finite-data conservatism for this comparison. The benchmark itself remains limited by the prescribed command bounds, disturbance reserve, actuator tube, and storage inequality.

\Cref{fig:final_map} locates the certificate classes for $\mathcal{D}_{\mathrm{s}}$ at $t=0$ within the boundary $W=\bar{W}$. The blue region satisfies the actuator-aware authority condition. In the yellow region, the static authority is nonnegative but smaller than the actuator reserve $\rho_{a,N}$, so the static certificate is feasible while the actuator-aware certificate is infeasible. The gray region has negative static authority. The relation $\mathcal{A}_{N}^{a}=\mathcal{A}_{N}^{\mathrm{stat}}-\rho_{a,N}$ accounts for the change in the feasible region. Pointwise coverage characterizes command availability at the sampled queries; the closed-loop experiment below assesses feasibility along a particular trajectory.

For the closed-loop experiment, we retain $\mathcal{D}_{\mathrm{s}}$ and use the reference $p_{x,r}(t)=13t$, $h_r(t)=100+4\sin(0.3t)$. Over $t\in[0,20]$~s, the initial tracking errors are $(e_x(0),e_h(0))=(8.9,3)$~m, corresponding to $W(0)=0.7349\in[0,\bar{W})$. The actuator error starts at zero and is driven by the constant forcing $\boldsymbol{\omega}_{a}=[0.5,0.2]^{\top}$. Projection of the reference velocity $\mathbf{u}_{0}=[13,1.2\cos(0.3t)]^{\top}$ onto the actuator-aware command set, using \eqref{eq:selector}--\eqref{eq:projection-root}, determines $\mathbf{u}_{s}$.

The desired tracking command uses the least-squares wind estimate:
\begin{equation}
 \widehat{\boldsymbol{\Theta}}^{\top}
 =\mathbf{R}_{N}\boldsymbol{\Phi}_{N}^{\dagger}.
\end{equation}
With the feature and residual matrices of $\mathcal{D}_{\mathrm{s}}$, we set
\begin{equation}
 \mathbf{u}_{d}=\operatorname{clip}_{\mathcal{U}_{\varepsilon}}
 \left[\mathbf{B}^{-1}\left(-0.9\mathbf{c}-\mathbf{a}
 -\widehat{\boldsymbol{\Theta}}^{\top}\boldsymbol{\phi}\right)\right].
 \label{eq:fig6_desired}
\end{equation}
The actuator-aware safety certificate requires $\mathcal{M}_{N}^{a}(\mathbf{x},\mathbf{u}_{c},t)\in(-\infty,0]$. We evaluate the data contribution using \eqref{eq:support} over the full consistency set $\mathcal{T}_{N}$ and incorporate the disturbance and actuator-error reserves through \eqref{eq:margin}. The gate in \eqref{eq:gate} selects the largest fraction of $\mathbf{u}_{d}$ for which $\mathbf{u}_{c}=(1-\lambda_{N,a}^{\star})\mathbf{u}_{s}+\lambda_{N,a}^{\star}\mathbf{u}_{d}$ satisfies this condition.

The vehicle approaches the reference from the initial position offset and follows the altitude variation in \Cref{fig:2a}. Both tracking errors remain within their asymmetric bounds in \Cref{fig:2b}. The reported entry times into $[-0.2,0.2]$~m are $3.05$~s for $e_x$ and $4.25$~s for $e_h$, with both errors remaining in that interval for the rest of the simulation. The actuator-aware margin is nonpositive to numerical precision at every simulated command update, and the maximum sampled storage is $0.7349$. These sampled results agree with the dissipation condition of \Cref{thm:authority} and place the simulated trajectory within the evaluation region.

\Cref{fig:2c} compares desired, commanded, and realized velocities. The gate modifies $\mathbf{u}_{d}$ when the desired command violates the certificate, while actuator tracking error accounts for the difference between $\mathbf{u}_{c}$ and $\mathbf{u}$. The reported realized-input ranges are approximately $[9,14]$~m/s horizontally and $[-1.856,1.044]$~m/s vertically, within the physical limits. The uncertainty reserve accounts for velocity mismatch in the storage inequality; the range of the actuator map enforces the physical input bounds. The displayed velocities use an 11-sample moving mean, whereas feasibility and bound checks use the unsmoothed samples at $0.005$-s intervals.

The desired command is fully admitted between approximately $0.1$ and $2.44$~s in \Cref{fig:2d}. Later, the gate approaches zero as the projected endpoint has negligible certificate slack and the desired endpoint violates the certificate. The fraction $-m_s/(m_d-m_s)$ then approaches zero, and the commanded input follows $\mathbf{u}_{s}$. The projected reference command continues to satisfy the storage inequality as the tracking errors decrease. A 51-sample moving mean is used only to display the gate; the implementation uses the unsmoothed coefficient. The displayed transitions do not establish continuity at zero-slack endpoints. The reported step-refinement check preserves sampled certificate feasibility and changes each settling time by $0.0025$~s when the integration step is halved.

\section{Conclusions and Future Work}
We developed a finite-data framework for certifying nonlinear output safety under dynamic asymmetric actuation. The measured regressor span characterized the queries admitting finite support for nonzero safety functionals without prior parameter bounds. Combining data support with a certified actuator-error tube yielded a pointwise feasibility condition, a minimum-deviation command, and a maximal gate along a prescribed command segment. Under the stated regularity and continuation conditions, the feedback preserved output and input constraints within the certification domain. The vehicle study indicated that measurement selection from a broader flight campaign tightened certificates and increased coverage despite equal record sizes and full row rank. The reported response maintained tracking and velocity bounds, with nonpositive certificate margins at sampled command updates. The study motivated acquisition strategies balancing certificate improvement against sampling effort and augmented-state certificates that retained actuator dynamics to reduce conservatism.

    	\bibliographystyle{IEEEtran}
	\bibliography{references}

\begin{thebibliography}{10}
\providecommand{\url}[1]{#1}
\csname url@samestyle\endcsname
\providecommand{\newblock}{\relax}
\providecommand{\bibinfo}[2]{#2}
\providecommand{\BIBentrySTDinterwordspacing}{\spaceskip=0pt\relax}
\providecommand{\BIBentryALTinterwordstretchfactor}{4}
\providecommand{\BIBentryALTinterwordspacing}{\spaceskip=\fontdimen2\font plus
\BIBentryALTinterwordstretchfactor\fontdimen3\font minus
  \fontdimen4\font\relax}
\providecommand{\BIBforeignlanguage}[2]{{%
\expandafter\ifx\csname l@#1\endcsname\relax
\typeout{** WARNING: IEEEtran.bst: No hyphenation pattern has been}%
\typeout{** loaded for the language `#1'. Using the pattern for}%
\typeout{** the default language instead.}%
\else
\language=\csname l@#1\endcsname
\fi
#2}}
\providecommand{\BIBdecl}{\relax}
\BIBdecl

\bibitem{Blanchini1999Survey}
F.~Blanchini, ``Set invariance in control,'' \emph{Automatica}, vol.~35,
  no.~11, pp. 1747--1767, 1999.

\bibitem{Ames2017CBF}
A.~D. Ames, X.~Xu, J.~W. Grizzle, and P.~Tabuada, ``Control barrier function
  based quadratic programs for safety critical systems,'' \emph{IEEE
  Transactions on Automatic Control}, vol.~62, no.~8, pp. 3861--3876, 2017.

\bibitem{Wabersich2021PSF}
K.~P. Wabersich and M.~N. Zeilinger, ``A predictive safety filter for
  learning-based control of constrained nonlinear dynamical systems,''
  \emph{Automatica}, vol. 129, p. 109597, 2021.

\bibitem{Xiao2022HOCBF}
W.~Xiao and C.~Belta, ``High-order control barrier functions,'' \emph{IEEE
  Transactions on Automatic Control}, vol.~67, no.~7, pp. 3655--3662, 2022.

\bibitem{Xiao2022Feasibility}
W.~Xiao, C.~A. Belta, and C.~G. Cassandras, ``Sufficient conditions for
  feasibility of optimal control problems using control barrier functions,''
  \emph{Automatica}, vol. 135, p. 109960, 2022.

\bibitem{Jagtap2020GPCBF}
P.~Jagtap, G.~J. Pappas, and M.~Zamani, ``Control barrier functions for unknown
  nonlinear systems using gaussian processes,'' in \emph{2020 59th IEEE
  Conference on Decision and Control (CDC)}, 2020, pp. 3699--3704.

\bibitem{Dhiman2023BayesianCBF}
V.~Dhiman, M.~J. Khojasteh, M.~Franceschetti, and N.~Atanasov, ``Control
  barriers in {Bayesian} learning of system dynamics,'' \emph{IEEE Transactions
  on Automatic Control}, vol.~68, no.~1, pp. 214--229, 2023.

\bibitem{Choi2020RLCBFCLF}
J.~J. Choi, F.~Casta{\~n}eda, C.~J. Tomlin, and K.~Sreenath, ``Reinforcement
  learning for safety-critical control under model uncertainty, using control
  lyapunov functions and control barrier functions,'' in \emph{Proceedings of
  Robotics: Science and Systems}, Corvallis, OR, USA, July 2020.

\bibitem{HsuTsukamoto2025Conformal}
T.-W. Hsu and H.~Tsukamoto, ``Statistical guarantees in data-driven nonlinear
  control: Conformal robustness for stability and safety,'' 2025.

\bibitem{Lopez2021RobustAdaptiveCBF}
B.~T. Lopez, J.-J.~E. Slotine, and J.~P. How, ``Robust adaptive control barrier
  functions: An adaptive and data-driven approach to safety,'' \emph{IEEE
  Control Systems Letters}, vol.~5, no.~3, pp. 1031--1036, 2021.

\bibitem{Das2025UncertaintyCBF}
E.~Da{\c{s}} and J.~W. Burdick, ``Robust control barrier functions using
  uncertainty estimation with application to mobile robots,'' \emph{IEEE
  Transactions on Automatic Control}, vol.~70, no.~7, pp. 4766--4773, 2025.

\bibitem{VanWaarde2020Informativity}
H.~J. van Waarde, J.~Eising, H.~L. Trentelman, and M.~K. Camlibel, ``Data
  informativity: A new perspective on data-driven analysis and control,''
  \emph{IEEE Transactions on Automatic Control}, vol.~65, no.~11, pp.
  4753--4768, 2020.

\bibitem{VanWaarde2023CSM}
H.~J. van Waarde, J.~Eising, M.~K. Camlibel, and H.~L. Trentelman, ``The
  informativity approach: To data-driven analysis and control,'' \emph{IEEE
  Control Systems Magazine}, vol.~43, no.~6, pp. 32--66, 2023.

\bibitem{DePersis2020Formulas}
C.~De~Persis and P.~Tesi, ``Formulas for data-driven control: Stabilization,
  optimality, and robustness,'' \emph{IEEE Transactions on Automatic Control},
  vol.~65, no.~3, pp. 909--924, 2020.

\bibitem{VanWaarde2022MatrixSLemma}
H.~J. van Waarde, M.~K. Camlibel, and M.~Mesbahi, ``From noisy data to feedback
  controllers: Nonconservative design via a matrix {S}-lemma,'' \emph{IEEE
  Transactions on Automatic Control}, vol.~67, no.~1, pp. 162--175, 2022.

\bibitem{Bisoffi2022Petersen}
A.~Bisoffi, C.~De~Persis, and P.~Tesi, ``Data-driven control via {Petersen}'s
  lemma,'' \emph{Automatica}, vol. 145, p. 110537, 2022.

\bibitem{Martin2023Survey}
T.~Martin, T.~B. Sch{\"o}n, and F.~Allg{\"o}wer, ``Guarantees for data-driven
  control of nonlinear systems using semidefinite programming: A survey,''
  2023.

\bibitem{Strong2025DeterministicCBF}
A.~K. Strong, A.~Kashani, C.~Danielson, and L.~Bridgeman, ``Learning control
  barrier functions with deterministic safety guarantees,'' 2025.

\bibitem{Ames2020IntegralCBF}
A.~D. Ames, G.~Notomista, Y.~Wardi, and M.~Egerstedt, ``Integral control
  barrier functions for dynamically defined control laws,'' \emph{IEEE Control
  Systems Letters}, vol.~5, no.~3, pp. 887--892, 2021.

\bibitem{Seiler2021IQC}
P.~Seiler, M.~Jankovic, and E.~Hellstrom, ``Control barrier functions with
  unmodeled input dynamics using integral quadratic constraints,'' \emph{IEEE
  Control Systems Letters}, vol.~6, pp. 1664--1669, 2022.

\bibitem{Kumar2026APC}
S.~Kumar, S.~R. Kumar, and A.~Sinha, ``Admissibility-preserving control for
  strict-feedback nonlinear systems with asymmetric actuator constraints,''
  arXiv preprint arXiv:2608.15375, 2026.

\bibitem{SinhaKumar2026AsymmetricConsensus}
A.~Sinha and S.~R. Kumar, ``Distributed safe consensus under asymmetric input
  and time-varying output constraints,'' arXiv preprint arXiv:2606.16116, 2026.

\bibitem{Sinha2026DirectedSafeCoordination}
A.~Sinha, L.~Gopikannan, and S.~R. Kumar, ``Networked admissibility-preserving
  control for directed safe coordination,'' arXiv preprint arXiv:2609.09384,
  2026.

\end{thebibliography}
\end{document}